\journal{\url{arxiv.org}}
\newtheorem{Thm}{Theorem}
\newtheorem{Cor}[Thm]{Corollary}
\newtheorem{Lem}[Thm]{Lemma}
\newtheorem{Defn}[Thm]{Definition}
\newtheorem{Claim}[Thm]{Claim}
\newcommand{\continue}{\textbf{continue}}
\newcommand{\avec}{\mathbf{a}}
\newcommand{\bee}{\mathbf{b}}
\newcommand{\dvec}{\mathbf{d}}
\newcommand{\e}{\mathbf{e}}
\newcommand{\f}{f}
\newcommand{\p}{\mathbf{p}}
\newcommand{\ve}{\mathbf{v}}
\newcommand{\dict}{\mathcal{D}}
\newcommand{\localdict}{\mathcal{L}}
\newcommand{\tupleset}{\mathcal{T}}
\newcommand{\createdict}{\mathtt{create\_dictionary}}
\newcommand{\appendto}{\mathtt{append\_to}}
\newcommand{\getitems}{\mathtt{get\_items}}
\newcommand{\lst}{\mathtt{lst}}
\newcommand{\key}{\mathtt{key}}
\newcommand{\val}{\mathtt{value}}
\newcommand{\algProb}{3/4}
\newcommand{\termProb}{9/10}
\newcommand{\la}{\langle}
\newcommand{\ra}{\rangle}
\newcommand{\bigoh}{\mathcal{O}}
\renewcommand{\f}{\ensuremath{f}}
\newcommand{\fterm}{\ensuremath{f}^{(\ell)}}
\newcommand{\ftermarg}[1]{\ensuremath{f^{(#1)}}}
\newcommand{\F}{\ensuremath{F}}
\newcommand{\FF}{{\mathbb{F}}}
\newcommand{\softoh}{\widetilde{\mathcal{O}}}
\newcommand{\sprod}{\textstyle\prod}
\newcommand{\ssum}{\textstyle\sum}
\newcommand{\SLP}{\ensuremath{\mathcal{S}_F}}
\newcommand{\SLPnoF}{\ensuremath{\mathcal{S}}}
\newcommand{\zero}{\mathbf{0}}
\newcommand{\z}{z}
\DeclareMathAlphabet\mathbfcal{OMS}{cmsy}{b}{n}
\newcommand{\Z}{\mathcal{Z}}
\newcommand{\ZZ}{\mathbb{Z}}
\newcommand{\I}{\mathcal{I}}
\newcommand{\ring}{\mathsf{R}}
\newcommand{\mdef}{\max(6, 2\lceil \log D \rceil, \lceil \tfrac{25}{8}\ln(4T) 
\rceil)}
\newcommand{\lambdadef}{\max(21, \tfrac{100}{3}(T-1)\ln D, 80n, 
\tfrac{10}{3}m\ln m)}
\newcommand{\sdef}{\lceil\log 40 + 2\log m + 2\log n + 2\log T\rceil}
\newcommand{\udef}{\lceil\log_q(2nD+1)\rceil}
\newcommand*{\MyDef}{\mathrm{def}}
\newcommand*{\eqdefU}{\ensuremath{\mathop{\overset{\MyDef}{=}}}}%
\newcommand*{\eqdef}{\,\mathop{\overset{\MyDef}{\resizebox{\widthof{\eqdefU}}{\heightof{=}}{=}}}\,}
\definecolor{darkgreen}{rgb}{0,.35,0}
\definecolor{darkblue}{rgb}{0,0,.35}
\definecolor{darkred}{rgb}{.35,0,0}
\begin{document}
\title{Faster Sparse Multivariate Polynomial Interpolation of 
Straight-Line Programs}

\author[uw]{Andrew Arnold}
\ead{a4arnold@uwaterloo.ca}
\ead[url]{www.AndrewArnold.ca}

\author[uw]{Mark Giesbrecht}
\ead{mwg@uwaterloo.ca}
\ead[url]{www.uwaterloo.ca/~mwg}

\author[usna]{Daniel S.\ Roche}
\ead{roche@usna.edu}
\ead[url]{www.usna.edu/cs/roche}

\address[uw]{Cheriton School of Computer Science,
  University of Waterloo,
  Waterloo, Ontario, Canada}

\address[usna]{Computer Science Department,
  United States Naval Academy,
  Annapolis, Maryland, USA}

\begin{abstract}

Given a straight-line program whose output is a polynomial function of
the inputs, we present a new algorithm to compute a concise
representation of that unknown function.
Our algorithm can handle any case where the unknown function
is a multivariate polynomial, with coefficients in an
arbitrary finite field, and with a reasonable number of nonzero
terms but possibly very large degree. It is competitive with
previously known sparse interpolation algorithms that work over
an arbitrary finite field, and provides an
improvement when there are a large number of variables.

\end{abstract}

\maketitle

\section{Introduction}

We consider the problem of interpolating  a sparse multivariate polynomial
$F$ over $\FF_q$, the finite field of size $q$:  
\begin{gather}\label{eqn:fdef}
\F \quad=\quad \sum_{\ell=1}^t c_\ell \z_1^{e_{\ell 1}} \z_2^{e_{\ell 2}} \cdots 
  \z_n^{e_{\ell n}}
\quad\in\quad \FF_q[\z_1,\ldots,\z_n].
\end{gather}
We suppose $\F$ is given by a {\em Straight-Line Program} (SLP), a
list of simple instructions performing operations $+$, $-$ and
$\times$ on inputs and previously computed values, which evaluates the
polynomial at any point.  We further suppose we are given bounds
$D > \max_j\deg_{\z_j}(\F)$ and $T \geq t$.  It is expected that the
bound $T$ tells us that $\F$ is {\em sparse}, i.e., that $T \ll D^n$,
the maximum number of terms.  The goal of our interpolation algorithm
is to obtain the $t$ nonzero coefficients $c_\ell \in \FF_q$
and corresponding exponents
$\e_\ell=(e_{\ell_1},\ldots,e_{\ell_n}) \in \ZZ^n$ 
of $\F$.  Our contribution is as follows.

\begin{Thm}\label{Thm:Main}
  Let $\F\in\FF_q[z_1,\ldots,z_n]$, and suppose we are given a
  division-free straight-line program $\SLP$ of length $L$ which
  evaluates $\F$, an upper bound $D\geq \max_j\deg_{\z_j}(\F)$, and an
  upper bound $T$ on the number of nonzero terms $t$ of $\F$. There
  exists a probabilistic algorithm which interpolates $\F$ with
  probability at least $3/4$.  The algorithm requires
  \begin{equation*}
\softoh\left( Ln(T\log D + n)(\log D + \log q)\log D
+ n^{\omega-1}T\log D + n^\omega\log D \right).
  \end{equation*}
  bit operations.\footnote{For two functions $\phi, \psi$,
  we say $\phi \in \softoh(\psi)$ if and 
  only if $\phi \in \bigoh( \psi \log^c \psi)$ for some constant $c \geq 0$.}
  \footnote{The constant $\omega < 2.38$ is the exponent of matrix multiplication, meaning
  that the product of two $n\times n$ matrices can be computed in
  $O(n^\omega)$ field operations.}
\end{Thm}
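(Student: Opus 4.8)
\emph{Proof plan.}\
The plan is to reduce the problem to a small number of \emph{univariate} dense interpolation problems, solve each of those by running the straight-line program over a suitable quotient ring, and then reassemble the multivariate exponents by linear algebra together with the Chinese remainder theorem.

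The one feature of $\SLP$ that this exploits is that, being division-free, it may be evaluated over any commutative $\FF_q$-algebra. First I would fix a small extension $\FF_{q^u}$, with $u=\lceil\log_q(\mathrm{poly}(T,n,D))\rceil$ so that $u\log q=\softoh(\log D+\log q)$, large enough that a uniformly random point $(\gamma_1,\dots,\gamma_n)\in\FF_{q^u}^n$ \emph{diversifies} $\F$, i.e.\ makes the coefficients of $\F(\gamma_1\z_1,\dots,\gamma_n\z_n)$ pairwise distinct with high probability (a routine Schwartz--Zippel estimate over the $\binom{t}{2}$ coincidence hypersurfaces). Then, for a prime $p$ and a vector $\mathbf{r}=(r_1,\dots,r_n)$ of residues modulo $p$, I would run $\SLP$ over the ring $\FF_{q^u}[x]/(x^p-1)$ on the point $\z_j\mapsto\gamma_j x^{r_j}$. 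The output is $\F(\gamma_1x^{r_1},\dots,\gamma_n x^{r_n})$ reduced modulo $x^p-1$: a dense univariate polynomial of degree $<p$ in which the term $c_\ell\prod_j\gamma_j^{e_{\ell j}}$ of $\F$ contributes to the monomial $x^{\la\mathbf{r},\e_\ell\ra\bmod p}$. Each such run costs $L$ operations in $\FF_{q^u}[x]/(x^p-1)$, i.e.\ $\softoh(Lp(\log D+\log q))$ bit operations, and --- unlike DFT-based approaches --- never needs $p$-th roots of unity to lie in $\FF_q$.

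The next step is to choose $p$ at random from a suitable range so that, with probability close to $1$, the map $\e\mapsto\la\mathbf{r},\e\ra\bmod p$ is injective on the $t\le T$ exponent vectors of $\F$, and stays injective for all of the vectors $\mathbf{r}$ and primes used below. Under injectivity no cancellation occurs, so the univariate image above has exactly $t$ nonzero terms (revealing $t$), and from it I would read off the $t$ diversified coefficients together with, for each term, its compressed exponent $\la\mathbf{r},\e_\ell\ra\bmod p$. To recover the exponent vectors themselves, I would repeat the evaluation with $n$ independent random vectors $\mathbf{r}^{(1)},\dots,\mathbf{r}^{(n)}$ --- so that the $n\times n$ matrix $M$ with these rows is invertible modulo $p$, again with high probability --- and match corresponding terms across the $n$ images by their diversified coefficients, which are pairwise distinct by the choice of $(\gamma_j)$. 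Solving $M\e_\ell\equiv(\text{compressed exponents of term }\ell)\pmod p$ recovers every $\e_\ell$ modulo $p$; doing this for enough primes that their product exceeds $D$ --- $\softoh(\log D)$ primes suffice --- and applying the Chinese remainder theorem recovers each $\e_\ell\in\ZZ^n$ exactly. Dividing the recovered diversified coefficients by the now-known factors $\prod_j\gamma_j^{e_{\ell j}}$ then yields the $c_\ell$, completing the interpolation.

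For the running time, there are $\softoh(n)$ straight-line program runs per prime over $\softoh(\log D)$ primes (plus lower-order bookkeeping runs), and with the relevant prime value bounded by $\softoh(T\log D+n)$ each run costs $\softoh(L(T\log D+n)(\log D+\log q))$ bit operations; together these give the first summand of the claimed bound. The algebraic reconstruction is, per prime, one $n\times n$ linear solve over $\ZZ/p$ --- $\softoh(n^\omega)$ bit operations --- applied to the $t\le T$ right-hand sides, which by blocking the solves into $n$-column batches costs a further $\softoh(n^{\omega-1}T)$ bit operations; summed over the $\softoh(\log D)$ primes this contributes $\softoh(n^{\omega-1}T\log D+n^\omega\log D)$. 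Each random ingredient --- the prime, the diversifier, and the substitution vectors --- fails with at most a small constant probability after a union bound over the polynomially many collision, singularity, and coincidence events it governs, and the constants can be tuned (or the algorithm repeated $\bigoh(1)$ times with a cheap verification step) to bring the overall success probability to at least $3/4$. The hard part, I expect, is exactly this bundle of parameter choices: $p$ must be small enough to keep both the ring arithmetic and the linear algebra within budget, yet large enough that $\e\mapsto\la\mathbf{r},\e\ra\bmod p$ stays injective on the term set simultaneously across all $\softoh(n\log D)$ evaluations, and getting the stated bound rather than a cruder one calls for a single, carefully unified probabilistic analysis of term collisions, matrix singularity, and failure of diversification rather than naive separate union bounds.
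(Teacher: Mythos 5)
There is a genuine gap, and it sits exactly where you suspected: your plan requires the compression map $\e\mapsto\la\mathbf{r},\e\ra\bmod p$ to be \emph{injective on the entire term set}, simultaneously for all $\softoh(n\log D)$ evaluations, and this is incompatible with the claimed running time. For a fixed image, two exponents $\e\not\equiv\e'\pmod p$ collide under a random $\mathbf{r}$ with probability $1/p$, so over the $\binom{t}{2}$ pairs the collision probability is on the order of $T^2/p$; demanding no collision at all therefore forces $p=\Omega(T^2)$ for a single image, and $p=\Omega(T^2 n\log D)$ if injectivity is to hold across all images by a union bound. But each probe of the straight-line program costs $\softoh(Lp(\log D+\log q))$ bit operations, so with such a $p$ the probe phase alone exceeds the stated bound $\softoh(Ln(T\log D+n)(\log D+\log q)\log D)$ by at least a factor of $T$. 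Conversely, with the budget-compatible modulus $p=\softoh(T\log D+n)$, collisions in each image are not a rare failure event to be union-bounded away --- they occur with probability close to $1$ --- so an algorithm that reads off ``exactly $t$ terms, no cancellation'' from a single image simply does not have correct input to work with. There is no tuning of constants that rescues this; the injectivity requirement itself must be dropped.

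The paper's algorithm differs from yours precisely on this point: it \emph{tolerates} collisions and works per term. Parameters are set (Lemma~\ref{Lem:ExpCols}, Lemma~\ref{Lem:SubsCols}, Corollary~\ref{Cor:term}) so that each \emph{fixed} term is uncollided in the $n$ images for a given prime, with an invertible $V_i$, with constant probability ($\ge 9/10$ after choosing $\mu=1/40$); Hoeffding's inequality then guarantees each term's residue vector $\e\bmod p_i$ is recovered for at least $m/2$ of the $m\ge 2\log D$ primes, and the $m/2$ threshold together with the fact that more than $m/2$ primes have product exceeding $D$ eliminates spurious exponents produced by colliding sums. Tolerating collisions also breaks your matching step: a colliding sum's coefficient varies from image to image and can accidentally equal a genuine coefficient (a ``deceptive pair''), so a single diversifier making the $t$ true coefficients pairwise distinct is not enough; the paper instead uses $s=O(\log(mnT))$ independent diversification vectors and groups degrees across images by the full coefficient fingerprint $(b_0,\dots,b_s)$, keeping an undiversified image so the true coefficient $b_0$ is read directly rather than recovered by division. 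Your overall architecture (evaluation over $\FF_{q^u}[x]/\la x^p-1\ra$, linear solves $V_i\e\equiv\dvec$, blocked matrix arithmetic, CRT over $\softoh(\log D)$ primes) matches the paper, but without the collision-tolerant, per-term probabilistic analysis and the threshold/fingerprint machinery, the proposal as written either fails to be correct (small $p$) or fails to meet the stated complexity (large $p$).
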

This probability may be increased to $1-\epsilon$ using standard 
techniques, with cost increased by a factor 
$\bigoh(\log(\epsilon^{-1}))$.

The rest of this introductory section puts our work in context and defines
the notation and problem definitions for the rest of the paper. The reader
who is already familiar with the area may wish to glance at our list of
notation in \ref{app:notation}, then skip to
Section~\ref{sec:overview}, where we give a high-level overview of the algorithm
referred to by Theorem 1 and work out a small illustrative example in
full detail. The end of Section~\ref{sec:overview} provides an
outline for the remainder of the paper.

\subsection{Background and related work}

Polynomial interpolation is a fundamental problem of computational
mathematics that dates back centuries to the classic work of
Newton, Waring, and Lagrange. In such settings, given a list of
$(n+1)$-dimensional points and some degree bounds, the coefficients
of the unique $n$-variate polynomial interpolating those points
is produced.

If the number of nonzero coefficients is relatively small, the unknown
function can be treated as an exponential sum, and the task becomes
that of finding the exponents and coefficients of only the nonzero terms.
This is the \emph{sparse interpolation} problem, and it differs crucially
from other interpolation problems not only in the representation of the
output, but also that of the input. Every efficient sparse interpolation 
algorithm of which we are aware requires some control over where the unknown
function is sampled, and typically takes as input some procedure or
black box that can evaluate the unknown sparse polynomial at any chosen
point.

The sparse interpolation problem has received considerable interest over
fields of characteristic zero. The classical Prony's method for
exponential sums from 1795 (which can be regarded as the genesis of sparse
interpolation) was later applied to sparse interpolation over the
integers \citep{BT88,Kal10a} and approximate complex numbers
\citep{GLL09,KLY11}.
Compressive sensing is a different approach for
approximate sparse interpolation which has the advantage of allowing
the evaluation points to be chosen at random from a certain distribution
\citep{CRT06,Don06}.  Sparse Fourier and Hadamard-Walsh transforms 
allow for the interpolation of a sparse, complex-valued polynomial given 
by its discrete Fourier transform, and can find reasonable sparse 
approximations to non-sparse polynomials \cite{HasIndKatPri12, 
KusMan93}.

As in the rest of this paper, define $n$, $T$, and $D$ to be
(respectively) the number of variables and known bounds on the number
of terms and degree of the unknown polynomial.  An
information-theoretic lower bound on the complexity of univariate
sparse interpolation is $\Omega(T(n\log D+\log q))$, the number of bits
used to encode $F$ in \eqref{eqn:fdef}.  This bound is (nearly) met by
Prony's \cite{Pro95} algorithm (as adapted to the
polynomial setting; see \cite{Kal10a}), which requires $O(T\log q+TL)$
bit operations when $n=1$ and under the implicit assumption that $q>D$.  Much of the
complexity of the sparse interpolation problem appears to arise from
the requirement to accommodate \emph{any} finite field. Prony's
algorithm is dominated by the cost of discrete logarithms in $\FF_q$,
for which no polynomial time algorithm is known in general.  When
there is a choice of fields (say, as might naturally arise in a
modular scheme for interpolating integer or rational polynomials) more
efficient interpolation methods have been developed.  Kaltofen
\cite{Kal88:frag} demonstrates a method for sparse interpolation over
$\FF_p$ for primes $p$ such that $p-1$ is smooth; see
\citep{Kal10:pasco} for further exposition.  In our notation, this
algorithm would require $\softoh(LnT\log D + n^2 T\log^2 D)$ bit 
operations.

\citet{Zip90,HR99,JM10} developed multivariate sparse interpolation
algorithms that work over arbitrary finite fields, but whose running
time is polynomial in $D$, the degree.  The running time of these
methods is expressed in our terminology below for comparison, but in
fairness we should point out that \citep{JM10} is more general than
our algorithm as it relies exclusively on classical polynomial
arithmetic and requires only black-box access to the unknown function.

In extreme cases, the size of
the sparse representation may not be polynomial in the degree $D$, but rather in
$\log D$, as this sparse representation stores only the
coefficients and exponents of nonzero terms. In such cases, black-box
access to the unknown function may not even be sufficient, since the
degree could be much larger than the number of field elements.
Typically, algorithms in this setting take as input a straight-line
program, which allows for evaluations in a field extension or modulo
an ideal.

The earliest algorithm for sparse interpolation of a straight-line
program over an arbitrary finite field
whose cost is polynomial in $n$, $T$, and $\log D$
--- that is, the sparse representation
size --- was presented by \citet{GarSch09}. A series of results
\cite{ArnGieRoc13,AGR14,AR14,GieRoc11} has made use of different
randomizations to improve the complexity of this approach, in particular
reducing the dependence on the sparsity $T$ to quasi-linear.

The aforementioned algorithms for sparse interpolation of straight-line
programs are essentially univariate algorithms, but can easily be
extended to handle multivariate polynomials by use of the well-known
Kronecker substitution. A separate paper from two co-authors at ISSAC
2014 \citep{AR14} presented a new randomization that achieves similar
aims as the Kronecker substitution but with decreased degrees for sparse
polynomials. This technique is sufficiently general that it can be
combined with a wide variety of univariate interpolation algorithms to
achieve faster multivariate interpolation.

\begin{table}[tbh]
\begin{center}
\begin{tabular}{l|l}
\multicolumn{1}{c|}{\textbf{Algorithm}} & 
\textbf{Soft-O cost of SLP interpolation} \\
\hline
\citet{JM10} & $T^2 \log q (n+D) + LnT\log q$ \\
\citet{GarSch09} & $Ln^2 T^4 \log^2 D \log q + n^4T^4\log^3 D \log 
q$ \\
\citep{AGR14} with \citep{AR14} &
  $L n T (\log D + \log q)\log^2 D + n^\omega T$ \\[6 pt]
This paper (Thm 1) &
$Ln(T\log D + n)(\log D + \log q)\log D$ \\
& $+ n^{\omega-1}T(\log D) + n^\omega\log D$
\end{tabular}
\end{center}
\caption{Complexity of multivariate sparse interpolation algorithms%
\label{table:algs}}
\end{table}

Table~\ref{table:algs} summarizes the complexity of the best known
algorithms for multivariate sparse interpolation of straight-line
programs over an arbitrary finite field. Our new algorithm in this paper
uses many of the ideas in our ISSAC 2014 work \citep{AR14,AGR14}, but
synthesizes them in a novel way and reduces the amount of linear algebra
required. The cost is similar to the combination of our ISSAC 2014
results, but will be faster when the number of variables $n$ is
sufficiently large that the cost of computing $O(T)$ matrix inverses
dominates the complexity of the previous approach.

As a concrete example that may simplify the results of
Table~\ref{table:algs} and highlight the improvements here, suppose
the size of the SLP is equal to the number of nonzero terms ($L=T$),
which in turn twice the number of variables ($T=2n$), and the 
degree polynomials in the number of terms ($D < T^{\bigoh(1)}$). 
Then the cost of 
the algorithm in 
\citep{JM10} is at least $\softoh(n^4)$, the algorithm obtained from
our ISSAC 2014 results has cost $\softoh(n^{\omega + 1})$, and the new
algorithm presented in this paper further reduces the complexity in 
this case to $\softoh(n^3)$.

\subsection{Conventions and Notation}
\label{subsec:notation}

The technical nature of our results unfortunately necessitates a fair bit of
notation. In an effort to unburden the reader, we define the most
important facets of our notation here, and provide a reference
table in \ref{app:notation} that contains the names and conventions
used throughout the paper.

We will write vectors in boldface and vector entries in standard
typeface (with subscripts).  Column vectors will be written as
comma-separated tuples.  That is, we will write a length-$n$ column
vector as $\ve=(v_1, \dots, v_n)$, whereas $\ve_1, \dots, \ve_\ell$
denotes a list of $\ell$ vectors where each
$\ve_i = (v_{i1}, \dots, v_{in})$.  

A slight exception to this is that
we will let $\Z$ be the vector of indeterminates $(z_1,\ldots,z_n)$,
and hence $\FF_q[z_1,\ldots,z_n]$ is written $\FF_q[\Z]$.
For any exponent vector $\e=(e_1,\ldots,e_n)\in\ZZ_{\geq 0}^n$ we write $\Z^\e$
for the term $z_1^{e_1}\cdots z_n^{e_n}$. Similarly for any vector
$\avec=(a_1,\ldots,a_n)$, we write $\avec^\e$ for the product
$a_1^{e_1}\cdots a_n^{e_n}$.
For a single indeterminate $x$,
the notation $x^\e$ means $x^{e_1}\cdots x^{e_n}=x^{e_1+\cdots+e_n}$.

We assume a polynomial image $G(x) \bmod H(x)$ is represented in its
reduced form.  That is, we store the image as $R(x)$, where $G=HQ+R$,
$\deg_x R < \deg_x H$.  Similarly, an integer $s$ reduced modulo
$t \in \ZZ_{>0}$ is represented by $r$ such that $s=tq+r$, where
$q,r\in\ZZ$ and $0\le r<t$.

We use ``soft-oh'' notation for our cost analysis.  For an integer
constant $k$ and two functions $\phi, 
\psi : \mathbb{R}_{>0}^k \rightarrow \mathbb{R}_{>0}$, we say $\phi \in 
\softoh(\psi)$ if and only if $\phi \in \bigoh( \psi \log^c \psi)$ for some 
constant $c>0$.  Our algorithm will depend on asymptotically fast matrix 
multiplication.  By \cite{LeG14}, the cost of multiplying two $n \times n$ 
matrices over a field $\mathsf{F}$ entails $\bigoh( n^\omega)$ arithmetic 
operations in $\mathsf{F}$, where we can take $\omega=2.3728639$.

For $q=p^v$, $p$ prime, we suppose that elements of $\FF_q$ is 
represented as $\ZZ_p[x]/\la \Phi(x) \ra$, where $\Phi$ is a 
degree-$v$ irreducible polynomial over $\ZZ_p$.  We further 
identify $\FF_{q^u}$ with $\FF_q[y]/\la \Psi(y) \ra$, for $\Psi$ 
an irreducible polynomial over $\FF_q$.  Under such
representation, arithmetic operations in $\FF_{q^u}$ may be 
computed in $\softoh(u)$ arithmetic operations in $\FF_q$, each of which 
in turn may be computed in $\softoh( v )$ arithmetic operations in 
$\ZZ_p$ \cite{CanKal91}\footnote{See, e.g., Thm. 9.6 of 
\cite{MCA} for a cost analysis for division 
operations.}.  Multiplication in $\ZZ_p$ 
may be performed 
with $\softoh(\log p)$ bit operations (e.g., refer to \cite{Fur09} and Thm. 
9.8 of \cite{MCA}).  It follows that one can perform an 
arithmetic operation in $\FF_{q^u}$ in $\softoh( u \log q)$ bit 
operations.

Our algorithm will require randomness.  We assume that we may 
obtain a random bit with bit-cost $\bigoh(1)$, and that
the cost of choosing $x$ from a set $S$ uniformly at 
random admits a bit-cost of $\bigoh( \log|S|)$. The amount of randomness
required for our algorithm is stated in Lemma~\ref{lem:random}.

The polynomial 
$\F$ we are interpolating will be assumed to possess the following 
description and features throughout the remainder of this article:

\begin{align*}
  \F &= \sum_{\ell=1}^t c_\ell \Z^{\e_\ell} \in \FF_q[\Z], \\
  D & \geq \max_{j \in [n]}\deg_{\z_j}(\F),\\
  T & \geq t,\\
  \fterm &= c_\ell\Z^{\e_\ell}, \\
  \I_{i} & = \langle z_1^{p_i}-1, \ldots, z_n^{p_i}-1\rangle, \\
  \F_i &= \F \bmod \I_{p_i} \in \FF_q[\Z]/\I_{p_i}, \\
  \F_{ij} &= \F(x^{\ve_{ij}}) \bmod (x^{p_i}-1) \in 
  \FF_q[x]/\la x^{p_i}-1 \ra,\\
  \F_{ijk} &= \F(\avec_k x^{\ve_{ij}}) \bmod (x^{p_i}-1) \in 
             \FF_{q^u}[x]/\la x^{p_i}-1 \ra.
\end{align*}

\subsection{Straight-Line Programs}

A Straight-Line Program (SLP) is a branchless sequence of 
arithmetic instructions that may represent a rational function.

\begin{Defn}
A {\em Straight-Line Program} (SLP) 
over a ring $\ring$ with inputs $\z_1, \dots, \z_n$, is a sequence 
of arithmetic instructions 
$\SLPnoF =(\SLPnoF_1, 
\dots, 
\SLPnoF_L)$ of the form $\SLPnoF_i = (\beta_i \leftarrow \alpha 
\star \alpha')$, where $\star \in \{ +, -, \cdot, \div\}$ and 
$$
\alpha, \alpha' \in \{z_1, \dots, z_n\} \cup \{\beta_j \mid j < 
i\} \cup 
\ring.$$
We say $\beta$ 
(if well-defined) is the output for a choice of inputs $\z_1, 
\dots, \z_n$.
\end{Defn}
We say $\SLPnoF$ is {\em $n$-variate} if it accepts $n$ inputs.  
We let $L$ denote the length, or number of instructions, of an SLP.
 We henceforth restrict our attention to {\em division-free} SLPs 
 over $\FF_q$, so as to avoid possible division by zero.

Since an SLP gives us a list of arithmetic instructions, 
we may choose inputs $z_i$ from $\ring$, homomorphic images of 
$\ring$, or ring extensions thereof.  For instance, we may treat 
$\z_i$ as indeterminates, in which case, the resulting outputs 
$\beta_i$ each are polynomials from $\FF_{q}[\Z]$.  We say 
$\SLPnoF$ {\em computes} $\F \in \FF_{q}[\Z]$ if it outputs 
$\beta_L = \F$ given indeterminate inputs $\z_1, \dots, \z_n$.  We 
write $\SLP$ to denote 
an SLP that computes $\F$.  Every division-free SLP over $\FF_q$ 
computes some $\F \in \FF_q[\Z]$.

The aim of {\em sparse interpolation}, given an SLP $\SLP$, is 
to construct a {\em sparse representation} of $\F$: a list of 
nonzero terms of distinct degree comprising $\F$.  For instance, 
$\F=5z_1^6 + 7z_1^2 z_2^3$ admits a sparse representation 
$((5,(6,0)),(7,(2,3))$.

One could naively interpolate $\SLP$ by treating inputs $\z_1, 
\dots, \z_n$ as indeterminates and performing each arithmetic 
instruction as an arithmetic operation in $\FF_q[\Z]$.  A 
caveat of such an approach is that intermediate results $\beta_i$ 
may have arbitrarily many terms or degree with respect to that of 
$\F$.  Such an approach, in the worst case, has cost exponential 
in $L$.

Instead, we will use $\SLP$ to compute homomorphic images of 
$\F$.  We construct images of the form $\F(a_1 x^{v_1}, \dots, 
a_n x^{v_n}) \bmod (x^p-1)$, for appropriate choices of $p \in 
\ZZ_{>0}$.  Computing images of this form is preferable because 
it bounds the cost of executing each arithmetic instruction of 
$\SLP$.  
Namely, we perform arithmetic in $\FF_q[x]/\la x^p-1\ra$.  Using 
FFT-based techniques, one can perform arithmetic operations 
in $\ring[x]/\la x^p-1 \ra$, $\ring$ a ring, in $\softoh( p )$ 
arithmetic operations in $\ring$ \cite{CanKal91}.  Sometimes we 
will have to choose $a_i$ from a ring extension $\FF_{q^u}$, in 
which case we construct an image of $\F \in \FF_{q^u}[x]/ \la 
x^p-1 \ra$.  Per the previous discussion of finite field 
arithmetic in section \ref{subsec:notation}, this gives the 
following cost, which we state as a claim.

\begin{Claim}\label{claim:probe}
Given a length-$L$ division-free SLP $\SLP$ computing $\F \in 
\FF_q[\Z]$, one can construct an image of $\F$ in 
$\FF_{q^u}[x]/\la x^p-1 \ra$ in $\softoh( Lpu )$ operations in 
$\FF_q$, or $\softoh( Lpu \log q)$ bit operations.
\end{Claim}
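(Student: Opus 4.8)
The plan is to walk through the SLP instruction by instruction, maintaining at each step the reduced image of the current intermediate value $\beta_i$ as an element of $\FF_{q^u}[x]/\la x^p-1\ra$, and to bound the cost of each of the $L$ instructions separately. First I would set up the evaluation: replace each input $z_i$ by the assigned value $a_i x^{v_i}$, reduced modulo $x^p-1$; since $x^p\equiv 1$, the monomial $x^{v_i}$ reduces to $x^{v_i\bmod p}$, so each input image is a single term and costs $\softoh(u\log q)$ bit operations to form (one exponentiation $a_i\mapsto a_i$ is trivial here, as $a_i\in\FF_{q^u}$ is given). Ring constants appearing in the SLP are handled identically. Then, processing $\SLPnoF_i=(\beta_i\leftarrow \alpha\star\alpha')$ in order, each operand $\alpha,\alpha'$ is either an input image, a constant, or a previously computed $\beta_j$ with $j<i$, all of which are already stored in reduced form in $\FF_{q^u}[x]/\la x^p-1\ra$.

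The core of the argument is the per-instruction cost bound. For $\star\in\{+,-\}$ the operation is coefficient-wise addition or subtraction of two polynomials of degree $<p$ over $\FF_{q^u}$, costing $\oh(p)$ operations in $\FF_{q^u}$. For $\star=\cdot$ we multiply two such polynomials and reduce modulo $x^p-1$; using the FFT-based cyclic convolution referenced in the excerpt (\cite{CanKal91}), this is $\softoh(p)$ operations in $\FF_{q^u}$. In every case the result has degree $<p$, i.e.\ is already in reduced form, so we never accumulate size blow-up — this is exactly the point of working modulo $x^p-1$, and it is what rescues us from the exponential-in-$L$ behavior of naive symbolic evaluation noted just before the claim. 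Thus each of the $L$ instructions costs $\softoh(p)$ operations in $\FF_{q^u}$, for a total of $\softoh(Lp)$ operations in $\FF_{q^u}$.

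It remains to convert to the stated cost. By the discussion of finite-field arithmetic in Section~\ref{subsec:notation}, one arithmetic operation in $\FF_{q^u}\cong\FF_q[y]/\la\Psi(y)\ra$ costs $\softoh(u)$ operations in $\FF_q$, and one operation in $\FF_q$ costs $\softoh(\log q)$ bit operations; chaining these gives $\softoh(Lpu)$ operations in $\FF_q$, equivalently $\softoh(Lpu\log q)$ bit operations, absorbing the iterated logarithmic factors into the soft-$\oh$. The one genuine subtlety — the main thing to get right rather than the main difficulty — is ensuring that the degree-$<p$ invariant is actually maintained after multiplication, i.e.\ that the cyclic reduction is folded into the FFT-based product rather than treated as a separate (potentially costlier) division step; this is standard, but it is the hinge on which the $\softoh(p)$-per-step bound rests. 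Everything else is routine bookkeeping over the $L$ instructions.
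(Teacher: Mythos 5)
Your proposal is correct and follows essentially the same reasoning the paper gives in the paragraph preceding the claim: evaluate the SLP instruction by instruction in $\FF_{q^u}[x]/\la x^p-1\ra$, using FFT-based cyclic arithmetic at $\softoh(p)$ operations in $\FF_{q^u}$ per step, then unwind the field-extension and bit-cost factors $u$ and $\log q$. The paper states this without a formal proof environment, but your argument fills it in faithfully with no gaps.
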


\section{Overview\label{sec:overview}}

Our algorithm works by using three types of randomized 
homomorphisms to
compress the size of the problem, while still maintaining sufficient
information for reconstruction.  The randomization ensures that there
is not too much ``information collapse'' and we can correlate the
different homomorphic images with terms of the sparse polynomial we are
attempting to recover.  
A complete description of the algorithm is given
later in Procedure~\ref{proc:si},  along with a complete analysis, but we present the main ideas here.

\subsection{A high-level description of the algorithm}

In our algorithm we construct homomorphic images of $\F$ in order to
obtain information about the terms of $\F$.  The homomorphisms
compress the sparse polynomial so that it is amenable to efficient
interpolation, while still maintaining the fundmental structure.  A
final stage of the algorithm reconstructs the sought after sparse
polynomial from its homomorphic images.  We employ three distinct
types of homomorphism.  The first reduces the degree of each variable,
the second transforms the problem into a univariate problem while the
final one ensures that all the terms are distinct (and so can be
identified during reconstruction).

For the first type of homomorphism, we truncate, or wrap, each of the
variables $z_j$ modulo a prime $p_i$.  In fact, we construct this same
homomorphism $m$ times, with a collection of random primes
$p_1,\ldots,p_m$, selected once for the entire computation.
Specifically, let $\F_i$ be the polynomial obtained by replacing
$z_j\in\FF_q[\Z]$ by $z_j \bmod (z_j^{p_i}-1)$ for all $j\in [n]$.
Letting $\I_{p_i}$ be the ideal
$\langle z_1^{p_i}-1, \ldots,z_n^{p_i}-1\rangle$, we define
\begin{align*}
\Phi_{i}:~~~~~~~~ \FF_q[\Z] \to & \FF_q[\Z]/\I_i\\
z_1^{e_1}\cdots z_n^{e_n} \mapsto & z_1^{e_1\bmod p_i}\cdots
                                    z_n^{e_n\bmod p_i},
\end{align*}
so that $\F_i = \Phi_i(\F)$.
We say two terms $c\Z^{\e}$ and $c'\Z^{\e'}$ (alternatively, their
exponents) {\em collide} under this homomorphism if
$\Phi_i(\z^{\e})=\Phi(\z^{\e'})$.  This happens precisely when
$\e\equiv\e'(\bmod p_i)$, and we say that terms $c\Z^{\e}$ and
$c\Z^{\e'}$ collide in $\F_i$ in this case.  We define a {\em collision} to be a set
of size at least two comprising all the terms of $\F$ whose
exponents all agree under $\Phi_i$.

The second homomorphism builds on the first type, by not only truncating
the degrees but also reducing the number of variables to a new one $x$.  For each prime
$p_i$, we choose $n$ random vectors
$\ve_{i1},\ldots,\ve_{in}\in\ZZ_p^n$.  The homomorphism $\Psi_{ij}$ is
then defined by
\begin{align*}
\Psi_{ij}: \FF_q[\Z] & \to {\FF_q[x]}/{\langle x^p-1\rangle} \\
       z_k  & \mapsto x^{v_{ijk}\bmod p_i}.
\end{align*}
This further reduces the size of images we need to compute by an
exponential factor in $n$.  

Consider the term $f=c\z^\e=cz_1^{e_1}\cdots z_n^{e_n}$ of $\F$, which
maps under $\Psi_{ij}$ to
\[
\Psi_{ij}(f)
=
c x^{(e_1v_{ij1} + \cdots + e_{ijn}v_{ijn}) \bmod p_i}
=
c x^{\ve_{ij}\cdot\e} \bmod (x^{p_i}-1),
\]
and suppose its image has degree $d_{ij}<p_i$.
Then, combining the results from $n$ such homomorphisms,
$(e_1,\ldots,e_n) = \e \bmod {p_i}$ is the solution to
the linear system
\begin{equation}
  \label{eqn:linsys}
  \underbrace{
  \begin{bmatrix}
    v_{i11} & \cdots & v_{i1n} \\
   \vdots &           & \\
   v_{in1} & \cdots & v_{inn}
  \end{bmatrix}}_{V_i}
  \underbrace{
  \begin{bmatrix}
    e_1\\ \vdots \\ e_n
  \end{bmatrix}}_{\e}
  \equiv
  \underbrace{
  \begin{bmatrix}
    d_{i1} \\ \vdots \\ d_{in}
  \end{bmatrix}}_{\dvec_i}
  \bmod\ p_i.
\end{equation}

The basic idea of our algorithm is then to first make a selection of
random primes $p_1,\ldots,p_m$, and for each prime choose a random
matrix $V_{i}\in\ZZ_{p_i}^{n\times n}$ and compute its inverse.  
Now for each term $\f=c \Z^\e $ of $\F$, and for each prime $p_i$,
we obtain a vector of degrees
$\dvec_i$.  Using linear algebra as above, we can determine the
degree vector $(e_{i 1},\ldots,e_{i n})=\e \bmod p_i$.

In order for us to identify such a vector $\dvec_i$, we require that
\emph{(i)} $\f$ is not in a collision in any of the images
$\F_{ij}$, for any $j \in [n]$, and \emph{(ii)} 
all the other terms in images $\F_{ij}$ have a distinct coefficient from
that of $\f$.  The
first criterion is probabilistically ensured by our choice of images
$\F_{ij}$.  In order to obtain the latter, we need to employ a third
type of homomorphism which randomizes coefficients of terms of $\F$, as
well as coefficients arising from collisions of terms.  
This technique is called \emph{diversification} after its introduction
in \citep{GieRoc11}.

Specifically, {\em before} applying the aforementioned mappings to 
$\F$, we first choose a small number $s$ of random vectors
$\avec_1,\ldots,\avec_s$ over a field extension $\FF_{q^u}$, each
of which defines a mapping
\[
  (\z_1,\ldots,\z_n) \mapsto (a_{k1}\z_1,\ldots,a_{kn}\z_n).
\]
We then apply the latter homomorphisms and construct
\[
  \F_{ijk} \eqdef \sum_{\ell=1}^t c_\ell \avec_k^{\e_\ell}\Z^{\e_\ell\ve_j 
\bmod p_i},
\]
for $(i,j,k) \in [m,n,s]$.  

Now observe that the collisions only occur according to the choices of
primes $p_i$ and exponent vectors $\ve_j$; the different choices for
$\avec_k$ affect the coefficients in each image but not the exponents.
So for each pair $(i,j)\in [m,n]$, we have a sequence of degrees of nonzero
terms that appear in any $\F_{ijk}$ for $k\in[s]$. These are exactly the
degrees of nonzero terms in the polynomials determined by the
second homomorphism above, $\F_{ij} = \Psi_{ij}(\F)$.

In order to recover the complete multivariate exponents of the terms from these
images, we need to correlate these degrees in different images $\F_{ij}$ and
$\F_{i'j'}$, according to which term in $\F$ they correspond to.
This is where the third homomorphism (diversification) is used:
the random choice of $\avec_k$'s guarantees that, with high probability,
if the degree-$d$ term of $\F_{ij}$ does \emph{not} correspond to the degree-$d'$
term of $\F_{i'j'}$, there exists at least one $\avec_k$ such that
the degree-$d$ term of $\F_{ijk}$ has a different coefficient than the
degree-$d'$ term of $\F_{i'j'k}$. Hence the degrees in different images
$\F_{ij}$ can be grouped according to their coefficients in \emph{all}
of the images $\F_{ijk}$.

In our algorithm, this grouping is facilitated by a dictionary, for each
prime $p$, mapping coefficient vectors that appear in images 
$\F_{ijk}$ to their degrees. Whenever the same coefficient vector appears
for every vector $\ve_j, j\in[n]$, there is enough information to set up a linear
system as in \eqref{eqn:linsys} and recover that term's exponents modulo $p$.

For each term $f=c\Z^\e$ of $\F$,
the probability of actually obtaining $c$ and $\e \bmod p_i$
for some $i\in[m]$ will be shown to exceed $\termProb$.
By choosing sufficiently many primes $p_i$,
we can then guarantee that we will have enough 
information to construct all the terms of $\f$, and to detect all 
images of collisions, with probability $\algProb$.

\subsection{An Illustrative Example}

We consider an example which explains how we could use the 
suggested homomorphic images in order to construct terms of $\F$, 
as well as obstacles to our approach.  Suppose we are given an SLP 
that computes
$$
\F = \underbrace{z_1 z_2}_{\ftermarg{1}} + \underbrace{z_1^6 
z_2^6}_{\ftermarg{2}} + \underbrace{2z^4 z^{10}}_{\ftermarg{3}} + 
\underbrace{4z_1^3 z_2^{20}}_{\ftermarg{4}} \in 
\FF_{13}[z_1,z_2],
$$
and we are given that $\max_{j\in[2]} \deg_{z_i}\F < D=21$, and 
that 
$\F$ has at most $T=4$ nonzero terms.  Suppose we choose primes 
$p_1 = 5$ and $p_2 = 7$, such that
$$\arraycolsep=0pt
\begin{array}{r clclclcl}
F_1 = F \bmod (z_1^5-1, z_2^5-1) &=& &&
\overbrace{2z_1 
z_2}^{\mathllap{\Phi_1(\ftermarg{1}+\ftermarg{2})}} &+&
\overbrace{2z_1^4}^{{\Phi_1(\ftermarg{3})}} &+& 
\overbrace{4z_1^3,}^{{\Phi_1(\ftermarg{4})}}
\\
F_2 = F \bmod (z_1^7-1, z_2^7-1) &=&
\underbrace{z_1 z_2}_{{\Phi_2(\ftermarg{1})}} &+& 
\underbrace{z_1^6z_2^6}_{{\Phi_2(\ftermarg{2})}} &+& 
\underbrace{2z_1^4 z_2^3}_{{\Phi_2(\ftermarg{3})}} &+&
\underbrace{4z_1^3 z_2^{6}.}_{{\Phi_2(\ftermarg{4})}}
\end{array}
$$
Under the homomorphism $\Phi_1$, the terms $\ftermarg{1}$ and 
$\ftermarg{2}$ {\em collided}.  We call this type of collision an 
{\em exponent collision}.  Keep in mind that we do not know $\F, 
\F_1$, or $\F_2$.  We will construct images of $\F_1$ and $\F_2$ 
in order to recover terms of $\F$.

We choose $\ve_{1 1} = (4,1), \ve_{1 2} = (2,0) \in \ZZ_5^2$, 
such that
$$\arraycolsep=0pt
\begin{array}{r clclclcl}
F_{1 1} = F(x^4, x) \bmod (x^5-1) &=&
\overbrace{2}^{\mathllap{\Psi_{1 1}(\ftermarg{1}+\ftermarg{2})}}
&& &+&
\overbrace{2x}^{\Psi_{1 1}(\ftermarg{3})} &+&
\overbrace{4x^2,}^{{\Psi_{1 1}(\ftermarg{4})}}
\\
F_{1 2} = F(x^2, 1) \bmod (x^5-1) &=&
\underbrace{2x^2}_{\mathllap{\Psi_{1 
2}(\ftermarg{1}+\ftermarg{2})}} 
&& &+&
\underbrace{2x^3}_{{\Psi_{1 2}(\ftermarg{3})}} &+&
\underbrace{4x.}_{{\Psi_{1 2}(\ftermarg{4})}}
\end{array}
$$
If we constructed $\F_{1 1}$ and $\F_{1 2}$, we could suppose 
(correctly) that their terms with coefficient $4$ are images of 
the same single term $f$ of $\F$.  We could then construct the 
exponent $\e$ of $\Phi_1(f)$ as the solution to the linear system
$$
\underbrace{\begin{bmatrix}
4 & 1\\
2 & 0
\end{bmatrix}}_{V_1}\e = \begin{bmatrix} 2 \\ 1 \end{bmatrix} 
\bmod 5,
$$
which gives $\e = (3,0)$, from which we recover the term 
$\Phi_1( \ftermarg{4}) = 4z_1^3$ of $\F_1$.
While $\ftermarg{3}$ also does not collide in either image $\F_{1 
1}$ 
or $\F_{1 2}$, we cannot construct $\Phi_1(\ftermarg{3})$ in the 
same fashion because both $\F_{11}$ and $\F_{12}$ have two terms 
with coefficient $2$.  We call a pair of distinct terms 
$(f,f')$ over all pairs of nonzero terms from the images 
$\F_{ij}, (i,j) \in [2,2],$ a {\em deceptive pair} if $f$ and $f'$ 
share the same 
coefficient but are {\em not} images of the same sum of terms of 
$\F$.  For instance the terms $2$ of $\F_{11}$ and $2x$ of 
$\F_{22}$ form a deceptive pair, as the former is an image of 
$\ftermarg{1}+\ftermarg{2}$, whereas the latter is an image of 
$\ftermarg{3}$.  Similarly, the terms with coefficient $2$ in 
$\F_{11}$ form a deceptive pair, as do the terms $2$ of $\F_{11}$ 
and $2x^2$ of $\F_{12}$.

We similarly choose $\ve_{2 1} = (2, 4), \ve_{2 2} = (1,6) \in 
\ZZ_7^2$, 
such 
that
$$\arraycolsep=0pt
\begin{array}{r clclclcl}
F_{2 1} = F(x^2,x^4) \bmod (x^7-1) &=& &&
\overbrace{3x^6}^{\mathllap{\Psi_{2 1}
(\ftermarg{1}+\ftermarg{3})}} &+&
\overbrace{x}^{\Psi_{2 2}(\ftermarg{2})} &+&
\overbrace{4x^2,}^{{\Psi_{2 1}(\ftermarg{4})}}
\\
F_{2 2} = F(x, x^6) \bmod (x^7-1) &=& &&
\underbrace{2}_{\mathllap{\Psi_{2 2}(\ftermarg{1}+\ftermarg{2})}} 
&+&
\underbrace{2x}_{\Psi_{2 2}(\ftermarg{3})} &+&
\underbrace{4x^4.}_{\Psi_{2 2}(\ftermarg{4})}
\end{array}
$$
Now if have images $\F_{2 1}, \F_{2 2}$ we can reconstruct the 
exponent $\e$ of $\Phi_2(\ftermarg{4})$ by way of the linear system
$$
\underbrace{\begin{bmatrix}
2 & 4 \\ 1 & 6
\end{bmatrix}}_{V_2}\e =
\begin{bmatrix}
2 \\ 4
\end{bmatrix} \bmod 7,
$$
to get a solution $\e = (3,6)$, which gives us the term
$\Phi_2(\ftermarg{4}) = 4z_1^3 z_2^6$ of $\F_2$.  We cannot 
construct 
$\Phi_2(f)$ for any other term $f$ of $\F$ in this fashion, 
because $\ftermarg{1}$ and $\ftermarg{3}$ collide in image $\F_{2 
1}$, and $\ftermarg{1}$ and $\ftermarg{2}$ collide in image $\F_{2 
2}$.  We call these types of collision that depends on our choices 
of $\ve_{2 1}$ and $\ve_{2 2}$ {\em substitution collisions}.  If 
$V_i$ is invertible then any distinct terms of $\F_i$ will not 
collide for at least one vector $\ve_{ij}$, $j \in [n]$.

If we (correctly) suppose our recovered terms $4z_1^3 \in 
\FF_{13}[\Z]/\I_1$ and $4z_1^3 z_2^5 \in \FF_{13}[\Z]/\I_2$ are 
images of the same single term of $\F$, then we can reconstruct 
the exponent $\e$ of the term $\ftermarg{4}$ of $\F$ by way of the 
system of congruences
\begin{align*}
\e \bmod 5 &= (3,0),\\
\e \bmod 7 &= (3,6).
\end{align*}
We use Chinese Remaindering to the solve the system to get $\e = 
(3,20)$, from which we recover term $4z_1^3 z_2^{20}$ of $F$.  Our 
algorithm will choose sufficiently many primes such that, with 
high probability, 
we will be able to construct the exponents of {\em every} term of 
$\F$.

We were unable to recover some of the terms of $F$ because of 
deceptive pairs.  We would like that any terms in a deceptive pair 
are somehow distinguishable as images of distinct sums of terms of 
$\F$.  To that end we can choose $\avec_1 = (6,8) \in \FF_{13}$ 
and construct the corresponding images of $\F(6z_1,8z_2)$.  Note 
that
\begin{align*}
\F(6z_1, 8z_2) &= 9z_1 z_2 + z_1^6 z_2^6 + 8z_1^4 z_2^{10} + 
6z_1^3 z_2^{20},\\
\F(6z_1, 8z_2) \bmod (z_1^5-1, z_2^5-1) &= 10z_1 z_2 + 8z_1^4 + 
6z_1^3,\\
\F(6z_1, 8z_2) \bmod (z_1^7-1, z_2^7-1) &= 9z_1 z_2 + z_1^6 
z_2^6 + 8 z_1^4 z_2^3 + 6z_1^3 z_2^6.
\end{align*}
Again we do not the images listed above, but rather we construct
\begin{align*}
\F_{1 1 1} = \F(6x^4, 8x) \bmod (x^5-1)
&= 10 + 8x + 6x^2,\\
\F_{1 2 1} = \F(6x^2, 8) \bmod (x^5-1)
&= 10x^2 + 8x^3 + 6x,\\
\F_{2 1 1} = \F(6x^2, 8x^4) \bmod (x^7-1)
&= 4x^6 + x + 6x^2,\\
\F_{2 2 1} = \F(6x, 8x^6) \bmod (x^7-1)
&= 10 + 8x + 6x^4.
\end{align*}
Note that the terms of $\F_{ij1}$ have the same exponents as those of
$\F_{ij}$ for $(i,j) \in [2,2]$.  Only the coefficients are 
affected by this additional homomorphism.  However, now we can 
observe, for instance, that the degree-$0$ term of $\F_{1 1 1}$ 
differs from the degree-$3$ term of $\F_{1 2 1}$, such that we now 
can tell the degree-$0$ term of $\F_{1 1}$ and the degree-$3$ 
term of $\F_{1 2}$ form a deceptive pair.  
We say $\avec_1$ {\em reveals} the deceptive pair.  We leave it to 
the reader to check that $\avec_1$ reveals every deceptive 
pair of terms from $\F_{ij}, (i,j) \in [2,2]$.  
We may 
have to 
choose multiple vectors $\avec_k$ in order to reveal all deceptive 
pairs.  We may also have to choose $\avec_k$ over a sufficiently 
large field extension $\FF_{q^u}$.

If all deceptive pairs are 
revealed, then we can collect all terms $f$ of images $\F_{ij}$ 
according to their coefficients and the coefficients of 
their corresponding terms in the images $\F_{ijk}$.  Any such 
collection of terms are then images of the same sum of terms of 
$\F$.

For instance, we can collect the terms with coefficient $2$ in 
$\F_{1 1}, \F_{1 2}$ and $\F_{2 2}$ whose corresponding terms in 
$\F_{111}, \F_{121}$, and $\F_{221}$, respectively, have 
coefficient $8$.  These are exactly the terms of $\F_{ij}$ that 
are images of $\ftermarg{3}$.  Those terms from $\F_{11}$ and 
$\F_{12}$ allow us to construct an exponent $\e=V_1^{-1}(1,2) 
\bmod 5=(4,0)$ of a term of $\F_1$. 
This gives $\e = (4,0)$.  
The corresponding coefficient $2$.  This gives the term 
$\Phi_1(\ftermarg{3})=2z_1^4$ of 
$\F_1$.

Unfortunately, in this same fashion we can now collect terms
$2, 2x^2$, and $2$ of $\F_{11},\F_{12}$ and $\F_{22}$ 
respectively, all images of $\ftermarg{1}+\ftermarg{2}$.  
Considering the pair of such terms in $\F_{11}$ and $\F_{12}$, we 
can construct an 
exponent $\e=V_1^{-1}(0,2) \bmod 5 = 
(1,1)$, which gives a term $2x_1 x_2 \in \FF_{13}[z_1,z_2]/\la 
z_1^5-1,z_2^5-1\ra$.  If $\ftermarg{1}$ and 
$\ftermarg{2}$ had exponent collisions for too many primes $p_i$, 
and we did not detect that the resulting terms of $\F_i$ produced 
were images of a sum of terms of $\F$, then naively we might use 
Chinese Remaindering to then construct an exponent that is {\em 
not} a term of $\F$.

As the partial degrees of $\F$ are at most 20, any exponents $\e 
\neq \e'$ of $\F$ cannot be identical modulo $5$ and $7$.  Thus, 
were {\em every} deceptive pair revealed and we collected terms in 
the manner prescribed, any collection that results in a recovered 
term in both $\F_1$ and $\F_2$ could not be an image of a sum of 
multiple terms of $\F$.  We will use this principle in our 
algorithm in order to distinguish between terms of $F_{ij}$ that 
are images of single terms of $\F$, and those that are images of a 
sum of terms of $\F$.

\subsection{Outline of the paper}

The remainder contains a detailed description and analysis of our algorithm.
Sections \ref{sec:primes}--\ref{sec:div}
give the details and proofs relating to the three randomizations described above:
the degree-reducing primes $p_i$, the univariate substitution 
vectors $\ve_{ij}$,
and the diversification vectors $\avec_k$. Section~\ref{sec:construct} then gives
a full and complete description of the algorithm; a reader uninterested in the
probability analysis may safely skip to this part.
Finally, the proofs of (probabilistic) correctness and running time for our
algorithm are presented in Sections \ref{sec:prob} and \ref{sec:cost}.

\section{Truncating the Degree of Every Variable of $\F$\label{sec:primes}}

In order to interpolate a sparse univariate polynomial $\F$ given 
by an SLP, the usual method is to compute images of the form $\F' = \F 
\bmod (x^p-1)$, where $p$, typically prime, is considerably 
smaller than $D$ (see \cite{ArnGieRoc13,GarSch09,GieRoc11}).  This 
allows us to truncate potential 
intermediate expression swell over the execution of the 
straight-line program.  We typically choose $p$ such that, with 
high probability, the number of terms in collisions in each image 
$\F'$ is either zero or bounded by some fixed proportion $\rho$ of 
$T$.  From this requirement, it follows that most of the terms of 
$\F'$ are images of single terms 
of $\F$.  This is desirable because then most of the terms of 
$\F'$ 
contain ``good'' information about the sparse representation of 
$\F$.  We say two terms of $\F$ {\em collide} in the image $\F'$ 
if their respective images appearing in $\F$ are terms of the same 
degree.  Given a means of sifting good information from ``bad'' 
information (collisions), we can rebuild the sparse representation 
of $\F$ from a sufficiently large set of images.

In the multivariate case, we will more generally consider truncated
images
\begin{equation}\label{eqn:mvar_probe}
\F' = \F(\Z) \bmod (\Z^p-1) \eqdef \f(\z_1, \dots, \z_n) \bmod 
(\z_1^p-1, \dots, \z_n^p-1).
\end{equation}
We let an {\em exponent collision} denote a set of two or more 
terms of $\F$ whose exponents agree under the mapping from $\F$ to 
$\F'$ given by \eqref{eqn:mvar_probe}.  This occurs when there 
exist exponent vectors $\e \neq \e'$ such that
\begin{equation}
(\e - \e') \bmod p = \zero.
\end{equation}

\begin{Lem}\label{Lem:ExpCols}
Let $\F \in \FF_q[\Z]$ be an $n$-variate polynomial 
with $t \leq T$ terms and partial degrees $\deg_{\z_j}(\F) < 
D$, 
for $j \in [n]$.  Let $\mu \in (0,1)$, and let
\begin{equation}
\lambda \geq \max\bigl(21, \tfrac{5}{3}(T-1)\ln D/\mu \bigr).
\end{equation}
Choose a prime $p$ uniformly at random from $(\lambda,2\lambda]$.  
The probability that a fixed term of $\F$ is {\em not} in an 
exponent collision in $\F(\Z) \bmod (\Z^p-1)$ is at 
least $1-\mu$.
\end{Lem}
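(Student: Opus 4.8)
The plan is a union bound over the other terms of $\F$, supported by two elementary estimates: an upper bound on the number of ``bad'' primes that make a fixed pair of terms collide, and a lower bound on the number of primes available in the interval $(\lambda,2\lambda]$.

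First I would fix the term $f=c_\ell\Z^{\e_\ell}$ under consideration and let $c_{\ell'}\Z^{\e_{\ell'}}$ be any other term of $\F$, so that $\ell\neq\ell'$ and $\e_\ell\neq\e_{\ell'}$. By \eqref{eqn:mvar_probe}, these two terms collide in $\F(\Z)\bmod(\Z^p-1)$ exactly when $p\mid(e_{\ell j}-e_{\ell'j})$ for every $j\in[n]$. Since $\e_\ell\neq\e_{\ell'}$, pick a coordinate $j_0$ with $e_{\ell j_0}\neq e_{\ell'j_0}$; because the exponents are nonnegative and every partial degree is less than $D$, the integer $N\eqdef e_{\ell j_0}-e_{\ell'j_0}$ satisfies $0<|N|<D$. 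A collision forces $p\mid N$, so the offending primes are among the prime divisors of $N$ exceeding $\lambda$. If $p_1<\cdots<p_k$ are those divisors, then $\lambda^k<p_1\cdots p_k\le|N|<D$, hence $k<\ln D/\ln\lambda$. Summing over the at most $t-1\le T-1$ other terms of $\F$, at most $(T-1)\ln D/\ln\lambda$ of the primes in $(\lambda,2\lambda]$ place $f$ in some exponent collision.

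It then remains to bound the number of admissible primes from below. I would invoke an explicit prime-counting estimate of Rosser--Schoenfeld type, namely that $\pi(2\lambda)-\pi(\lambda)\ge\tfrac{3}{5}\,\lambda/\ln\lambda$ for all real $\lambda\ge 21$; this is precisely where the hypothesis $\lambda\ge 21$, and ultimately the constant $\tfrac{5}{3}$ in the statement, enters. Combining the two estimates, the probability that the uniformly random prime $p\in(\lambda,2\lambda]$ puts $f$ in an exponent collision is at most
\[
\frac{(T-1)\ln D/\ln\lambda}{\tfrac{3}{5}\,\lambda/\ln\lambda}=\frac{5(T-1)\ln D}{3\lambda},
\]
and the assumption $\lambda\ge\tfrac{5}{3}(T-1)\ln D/\mu$ makes the right-hand side at most $\mu$. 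Taking complements yields the claimed lower bound $1-\mu$.

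The routine parts — the divisor count and the union bound — are one line each; the only genuinely delicate point is the explicit lower bound on $\pi(2\lambda)-\pi(\lambda)$ with the constant $\tfrac{3}{5}$ valid all the way down to $\lambda=21$. The crude estimate obtained by subtracting an upper bound for $\pi(\lambda)$ from a lower bound for $\pi(2\lambda)$ is too weak near the threshold, so I would instead cite (or reprove, via bounds on $\psi(2x)-\psi(x)$) the sharper dyadic-interval estimate, and if necessary check the handful of small cases near $\lambda=21$ directly against the list of primes.
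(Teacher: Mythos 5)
Your proposal is correct and follows essentially the same route as the paper: bound the number of ``bad'' primes by $(T-1)\ln D/\ln\lambda$ via divisibility of exponent differences, then compare against the Rosser--Schoenfeld lower bound $3\lambda/(5\ln\lambda)$ on the number of primes in $(\lambda,2\lambda]$ for $\lambda\ge 21$ (exactly the estimate the paper cites). The only cosmetic difference is that you count bad primes pair-by-pair using a single coordinate $j_0$ where the exponents differ, whereas the paper bounds all bad primes at once from $\lambda^{|\mathcal{B}|n}\le\prod_{i,j}(e_{1j}-e_{ij})\le D^{(T-1)n}$; both give the identical bound, and your variant even sidesteps the zero/sign issue in the paper's product.
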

The lemma is a generalization of Lemma 2.1 of \cite{GieRoc11} and the proof follows similarly.
\begin{proof}[Proof of Lemma \ref{Lem:ExpCols}]
Without loss of generality, we consider the probability that the 
term $\ftermarg{1} = c_1\Z^{\e_1}$ is not in any collision.  Let 
$\mathcal{B}$ comprise the set of all ``bad'' primes $p\in 
[\lambda,2\lambda]$ such that $\ftermarg{1}$ collides with another 
term of $\F$ in the image $\f(\Z) \bmod (\Z^p-1)$.  If 
$\ftermarg{1}$ and $\fterm$ collide modulo $\Z^p-1$, then 
$p$ divides $e_{1j}-e_{\ell j}$ for every $j \in [n]$.  It follows 
that, for each $p\in \mathcal{B}$, $p^n$ divides
\begin{equation}
\prod_{i=2}^t \prod_{j=1}^n (e_{1j}-e_{ij}).
\end{equation}
Thus
\begin{equation}\label{eqn:lambda_def}
\lambda^{|\mathcal{B}|n} \leq \prod_{p \in \mathcal{B}}p^n \leq 
\prod_{i=2}^t \prod_{j=1}^n (e_{1j}-e_{ij}) \leq D^{(T-1)n},
\end{equation}
which gives us $|\mathcal{B}| \leq (T-1)\ln D/\ln \lambda$.  By Corollary 3 of 
\cite{RosSch62}, the total number of primes in the range $(\lambda, 2\lambda]$ 
is at least $3\lambda/(5\ln \lambda)$ for $\lambda \geq 21$.  As
\begin{equation*}
|\mathcal{B}| \leq \tfrac{(T-1)\ln D}{\ln \lambda} \leq \mu\tfrac{3\lambda}{5\ln \lambda},
\end{equation*}
this completes the proof.
\end{proof}

In the subsequent sections, we will show how we can compute a term 
of $\F(\Z)$ modulo $\langle z_1^p-1,\ldots,z_n^p-1\rangle$ 
with probability at least 
$\termProb$.  Therefore our algorithm will
select $m=\lceil 2\log D\rceil$ primes 
$p_i$ with corresponding images $\F_i$, so that any fixed
term $\f$ of $\F$ that is found in at least half of the images
$\F_i$ can be recovered in its entirety.

\section{Substitution Vectors and Substitution 
Collisions}\label{sec:subs}

In this section we will construct a set of images $\F_{ij}$ which 
will allow us to reconstruct {\em some} terms of $\F_i$.  One 
means of interpolating $\F_i$ is via {\em Kronecker substitution}, 
whereby we use a univariate interpolation algorithm to obtain an image
$$
\F_i' = \F_i(x, x^D, \dots, x^{D^{n-1}}) \in \FF_q[x].
$$
An advantage of this map is that it is collision-free, meaning that 
we can obtain every term of $\F_i$ from its image in $\F_i'$.  A 
term of $\F_i$ with exponent $\e$ will result in a term with 
exponent $e = \ssum_{j=1}^n e_j D^{j-1}$ in $\F_i'$, and hence 
$\e$ 
is given by the base-$D$ expansion of $e$.  

But the Kronecker map causes
considerable degree swell, as $\deg(\F_i')$ can be on the order of 
$\deg(\F_i)^n$, exponentially larger than our target degree.
We instead will construct 
$n$ univariate images of $\F_i$,
\begin{equation}
\F_{ij} = \F_i(x^{\ve_{ij}}) \bmod (x^{p_i}-1) \in \FF_q[x]/\la 
x^{p_i}-1 \ra,
\end{equation}
where the $\ve_{ij}$ are randomly chosen vectors from $\FF_q[\z]$, for each
$(i,j) \in [m,n]$.

We say that two terms $f=c\Z^\e$ and $f'=c'\Z^{\e'}$ of $\F_i$ are 
in a {\em substitution collision} if $(\e - \e') \bmod p_i \neq 
\mathbf{0}$, but $(\e - \e')\cdot \ve_{ij} \bmod {p_i} = \mathbf{0}$.  
In which case both terms have an image of degree $\e\cdot 
\ve_{ij} \bmod p$ in $\F_{ij}$.  If $f$ does not collide with any 
other terms of $\F$ in the images $\F_{ij}$, for any $j \in [n]$, 
then we can construct $\e$ as the solution to the linear system
\begin{equation}
\begin{bmatrix} \ve_{i1} \\ \vdots \\ \ve_{in} \end{bmatrix}\e = 
\dvec \bmod p_i,
\end{equation}
provided $V_i = \left[ v_{ijk} \right]_{j,k \in [n]}$ is 
invertible.

\begin{Lem}
Let $p\ge 23$ be prime and
consider a pair of exponents $\e \neq \e' \in \ZZ_p^n$.  Let 
$\ve_1, \dots \ve_n \neq \zero$ be chosen at random from 
$\ZZ_p^n$, Then, with probability exceeding $1-\tfrac{n}{p}$,
the inequality
$\e\cdot \ve_i \neq \e'\cdot \ve_i \bmod p$ holds for each
$i \in [n]$.
\end{Lem}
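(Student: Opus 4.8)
The plan is to bound the failure probability by a union bound over the $n$ vectors $\ve_i$ and, for each one, over the single ``bad event'' that $(\e-\e')\cdot\ve_i\equiv 0\pmod p$. Fix $i\in[n]$ and set $\w=\e-\e'$, which is a nonzero vector in $\ZZ_p^n$ by hypothesis. The key linear-algebra fact is that for a fixed nonzero $\w\in\ZZ_p^n$, the map $\ve\mapsto\w\cdot\ve$ is a nonzero $\ZZ_p$-linear functional on $\ZZ_p^n$, so its kernel is a hyperplane of size exactly $p^{n-1}$. Hence, were $\ve_i$ chosen uniformly from all of $\ZZ_p^n$, the probability that $\w\cdot\ve_i\equiv 0$ would be exactly $p^{n-1}/p^n=1/p$.

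The only wrinkle is that $\ve_i$ is drawn uniformly from $\ZZ_p^n\setminus\{\zero\}$ rather than from $\ZZ_p^n$; I would handle this by noting that $\zero$ lies in the bad hyperplane, so removing it can only decrease the bad probability: the conditional probability is $(p^{n-1}-1)/(p^n-1)\le 1/p$. Thus for each fixed $i$, $\Pr[\w\cdot\ve_i\equiv 0\pmod p]\le 1/p$. Taking a union bound over $i\in[n]$, the probability that $(\e-\e')\cdot\ve_i\equiv 0$ for \emph{some} $i$ is at most $n/p$, so with probability exceeding $1-n/p$ we have $\e\cdot\ve_i\neq\e'\cdot\ve_i\pmod p$ for every $i\in[n]$, as claimed. (The hypothesis $p\ge 23$ is not needed for this particular statement beyond ensuring $p$ is a prime so that $\ZZ_p$ is a field; it presumably appears for uniformity with the surrounding lemmas.)

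There is no real obstacle here — the argument is a one-line hyperplane count plus a union bound. The only point requiring a moment's care is the adjustment from the uniform distribution on $\ZZ_p^n$ to the uniform distribution on the nonzero vectors, and there the saving grace is simply that $\zero$ belongs to the kernel hyperplane, so excluding it never hurts. I would state the hyperplane count explicitly (perhaps as: since $\w\neq\zero$, some coordinate $w_k$ is a unit in $\ZZ_p$, and then for any choice of the other coordinates of $\ve_i$ there is exactly one value of the $k$-th coordinate making the dot product vanish), since that makes the count self-contained without invoking rank-nullity.
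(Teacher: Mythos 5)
Your proof is correct and follows essentially the same route as the paper: fix an index $i$, observe that the bad choices of $\ve_i$ form a kernel hyperplane of relative size $1/p$ in $\ZZ_p^n$ (equivalently, one coordinate of $\ve_i$ is uniquely determined by the others), and take a union bound over $i\in[n]$. You are in fact slightly more careful than the paper's own proof about the restriction $\ve_i\neq\zero$, which the paper glosses over; your observation that $\zero$ lies in the kernel hyperplane, so excluding it can only decrease the bad probability, is the clean way to handle it and is what makes the claimed strict inequality ``exceeding $1-n/p$'' literally hold.
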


\begin{proof}
As $\e \neq \e'$, there exists some $j \in [n]$ for which $e_j 
\neq e'_j$.  Without loss of generality assume $e_n \neq 
e'_n$.  Consider a single substitution vector $\ve_i \in 
\FF_{p}^n$.  Note, for any choice of $v_{i1}, \dots, v_{i,n-1}$, 
solving for $v_{in}$ in
\begin{equation*}
(\e - \e')\cdot \ve_i = 0 \bmod p
\end{equation*}
gives
\begin{equation*}
  v_{in} = \frac{\sum_{j=1}^{n-1} (e_j-e_j')v_{ij}}{e'_{n}-e_{n}} 
\bmod p.
\end{equation*}
That is, $v_{in}$ is uniquely determined given any choices for
the other elements in $\ve_i$.
Thus precisely a proportion $1/p$ of choices of substitution 
vectors $\ve \in \ZZ_p^n$ will cause a substitution collision 
between the terms with exponents $\e$ and $\e'$. By the 
union bound, the probability that any one of a random choice of $n$ 
vectors $\ve_1, \dots, \ve_n$ results in a substitution collision 
between $\e$ and $\e'$ is at most $n/p$.
\end{proof}

We also require that the matrix $V$ is invertible.  In a 
practical setting, if $V$ is not invertible, 
we might reasonably just choose $n^2$ new random entries for $V$
and try again.
But for the purposes obtaining fast deterministic running 
time in a Monte Carlo setting, if $V$ 
is singular, our algorithm will merely ignore the images 
$\F_{ij}$.  By 
(\cite{Dix01}, Part II, Chapter 1, Theorem 99), we have that the 
probability that a matrix chosen at random from $\ZZ_p^{n 
\times n}$ is invertible is $\prod_{i=1}^n (1-1/p^i)$.  For $p 
\geq 23$,
\begin{equation*}
\sprod_{i=1}^n(1-1/p^i) \geq \sprod_{i\geq 1}(1-1/23^i) \approx 
0.95463 \geq 19/20.
\end{equation*}
By the union bound we get the following lemma.

\begin{Lem}\label{Lem:SubsCols}
Consider a pair of exponents $\e \neq \e' \in \ZZ_p^n$.  Let 
$\ve_1, \dots \ve_n \in \ZZ_p^n$ be chosen uniformly at random, 
where $p \geq 23$. Let $V$ be the matrix whose rows are 
given by $\ve_1, \dots, \ve_n$, chosen at random from $\ZZ_p^n$.  
Then with probability at least $1-n/p-1/20$, $V$ is invertible and 
$\e\cdot \ve_j \neq \e'\cdot \ve_j 
\bmod p$ for every $j \in [n]$.
\end{Lem}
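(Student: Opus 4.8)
The plan is to combine the two preceding lemmas via a union bound. The statement of Lemma~\ref{Lem:SubsCols} packages together two independent-seeming facts: first, that a random matrix $V\in\ZZ_p^{n\times n}$ is invertible with probability at least $19/20$ (for $p\ge 23$), and second, that a fixed pair of distinct exponents $\e\ne\e'$ does not land in a substitution collision with respect to any of the $n$ rows $\ve_1,\dots,\ve_n$ with probability at least $1-n/p$ (this is exactly the preceding unnumbered lemma). So the whole proof is essentially two lines of complementary-event bookkeeping.

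Concretely, I would let $A$ be the event ``$V$ is singular'' and $B$ be the event ``$\e\cdot\ve_j=\e'\cdot\ve_j \bmod p$ for some $j\in[n]$.'' The text immediately above the lemma statement already establishes $\Pr[A]=1-\prod_{i=1}^n(1-1/p^i)\le 1-19/20=1/20$ using (\cite{Dix01}, Part~II, Ch.~1, Thm.~99) together with the numerical bound $\prod_{i\ge 1}(1-1/23^i)\approx 0.95463\ge 19/20$. The preceding lemma gives $\Pr[B]\le n/p$. Then by the union bound $\Pr[A\cup B]\le 1/20 + n/p$, so the complementary event --- $V$ invertible \emph{and} $\e\cdot\ve_j\ne\e'\cdot\ve_j\bmod p$ for every $j$ --- occurs with probability at least $1-n/p-1/20$, which is exactly the claim.

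One small subtlety worth a sentence: the preceding lemma was stated for $\ve_1,\dots,\ve_n$ chosen at random from $\ZZ_p^n\setminus\{\zero\}$ (nonzero vectors), whereas here the $\ve_j$ range over all of $\ZZ_p^n$; but this only makes the substitution-collision probability smaller (the all-zero vector always causes a collision, so excluding it can only help), so the bound $\Pr[B]\le n/p$ still applies verbatim --- or one simply reuses the counting argument in the preceding proof, which showed exactly a $1/p$ fraction of all of $\ZZ_p^n$ is bad for a fixed $j$, hence at most $n/p$ over the union of $j\in[n]$. There is essentially no obstacle here; the only thing to be careful about is that the two events $A$ and $B$ need not be independent, which is precisely why the union bound (rather than multiplying probabilities) is the right tool.

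\begin{proof}[Proof of Lemma \ref{Lem:SubsCols}]
Let $A$ denote the event that $V$ is singular and let $B$ denote the event that $\e\cdot\ve_j \equiv \e'\cdot\ve_j \pmod p$ for some $j\in[n]$. By (\cite{Dix01}, Part~II, Chapter~1, Theorem~99), the probability that $V\in\ZZ_p^{n\times n}$ is invertible equals $\prod_{i=1}^n(1-1/p^i)$, and as shown above this is at least $\prod_{i\ge 1}(1-1/23^i)\ge 19/20$ for $p\ge 23$; hence $\Pr[A]\le 1/20$. By the preceding lemma, $\Pr[B]\le n/p$. By the union bound, $\Pr[A\cup B]\le n/p + 1/20$, so with probability at least $1-n/p-1/20$ the matrix $V$ is invertible and $\e\cdot\ve_j\ne\e'\cdot\ve_j\bmod p$ for every $j\in[n]$.
\end{proof}
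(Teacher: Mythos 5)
Your proof is correct and matches the paper's own (implicit) derivation: the paper establishes the two ingredient bounds — $\Pr[V\text{ singular}]\le 1/20$ via Dickson's formula, and $\Pr[\text{some substitution collision}]\le n/p$ via the preceding lemma — and then states ``By the union bound we get the following lemma.'' Your observation about the nonzero-vector hypothesis is a fair reading of a small wrinkle in the paper, but as you note, the counting argument in the preceding proof already covers all of $\ZZ_p^n$, so the bound transfers verbatim.
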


We can use Lemmata \ref{Lem:ExpCols} and \ref{Lem:SubsCols} to 
bound the probability of a term of $\F$ being involved in either 
an exponent collision or a substitution collision by $1-\mu/2$ and 
$\tfrac{19}{20}-\mu/2$, respectively, with an appropriate choice of $\lambda$.  
By the union bound this gives the 
following 
Corollary.
\begin{Cor}\label{Cor:term}
Let $p_i$ be chosen at random from $(\lambda, 2\lambda]$, where
\begin{equation}
\lambda = \max\left(21, \tfrac{5}{6}(T-1)\ln D/\mu, 2n/\mu\right),
\end{equation}
Let $\ve_1, \dots, \ve_n \in \ZZ_{p_i}^n$, be chosen uniformly at 
random, and let $V$ be the matrix whose rows are given by the 
$\ve_j$.  Fix a term $\f$ of $\F$.  Then, with probability 
exceeding $\tfrac{19}{20}-\mu$, $V$ is invertible and $\f$ does not 
collide with any other term of $\F$ in the images $\F_{ij}$.
\end{Cor}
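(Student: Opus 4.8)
The plan is to combine Lemma~\ref{Lem:ExpCols} and Lemma~\ref{Lem:SubsCols} by a union bound, after normalizing all three probabilistic requirements to fail with probability at most $\mu/2$ or $1/20$. First I would fix the term $\f=c_1\Z^{\e_1}$ of $\F$ and identify the three bad events: (E1) $\f$ is in an exponent collision in $\F_i = \F \bmod \langle z_1^{p_i}-1,\ldots,z_n^{p_i}-1\rangle$; (E2) for the chosen substitution vectors $\ve_1,\ldots,\ve_n$, the image of $\f$ collides with the image of some other term of $\F_i$ in some $\F_{ij}$, $j\in[n]$; (E3) the matrix $V$ with rows $\ve_j$ is singular. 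The claim is that $\Pr[\text{E1}\cup\text{E2}\cup\text{E3}] \le \mu + \tfrac{1}{20}$, hence the complementary event has probability exceeding $\tfrac{19}{20}-\mu$. Note carefully the conditioning: Lemma~\ref{Lem:ExpCols} speaks of exponent collisions in $\F$ (equivalently, whether $\e_1 \equiv \e_\ell \pmod{p}$ for some $\ell$), which depends only on the prime $p_i$; Lemma~\ref{Lem:SubsCols} then conditions on a fixed pair of \emph{distinct} residues $\e \not\equiv \e' \pmod{p_i}$ and bounds the substitution-collision probability over the random $\ve_j$'s. So the events E1 and (E2,E3) are governed by independent randomness — the prime versus the substitution vectors — which makes the union bound clean.

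For E1, I would apply Lemma~\ref{Lem:ExpCols} with the failure parameter set to $\mu/2$ in place of $\mu$: choosing a prime $p_i$ uniformly from $(\lambda,2\lambda]$ with $\lambda \ge \max(21, \tfrac{5}{3}(T-1)\ln D/(\mu/2)) = \max(21, \tfrac{10}{3}(T-1)\ln D/\mu)$ guarantees $\Pr[\text{E1}] \le \mu/2$. Hmm — but the Corollary's stated $\lambda$ has the term $\tfrac{5}{6}(T-1)\ln D/\mu$, which corresponds to applying Lemma~\ref{Lem:ExpCols} with failure $\mu/2$ but absorbing a constant; I would just take $\lambda$ large enough that both Lemma~\ref{Lem:ExpCols} (with $\mu \leftarrow \mu/2$) and the quantity $n/p_i$ are controlled. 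Specifically, for E2 and E3: by Lemma~\ref{Lem:SubsCols}, \emph{conditioned on} $\f$ not being in an exponent collision — so that $\e_1 \not\equiv \e_\ell \pmod{p_i}$ for every other term $\ftermarg{\ell}$ — the probability that $V$ is singular or that $\e_1 \cdot \ve_j \equiv \e_\ell \cdot \ve_j \pmod{p_i}$ for some $j$ and some $\ell$ is at most $\tfrac{n}{p_i} + \tfrac{1}{20}$. Wait — Lemma~\ref{Lem:SubsCols} bounds the probability for a \emph{single} pair $(\e,\e')$; here $\f$ could collide with any of up to $t-1 \le T-1$ other terms. Re-reading: actually the bound $n/p$ in Lemma~\ref{Lem:SubsCols} comes from a union bound over the $n$ substitution vectors for a single opposing term, and the invertibility of $V$ already ensures that \emph{any} two distinct residues are separated by at least one $\ve_j$ (as remarked in the example: ``If $V_i$ is invertible then any distinct terms of $\F_i$ will not collide for at least one vector $\ve_{ij}$''). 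So once $V$ is invertible, E2 cannot happen at all — the only failure modes are E1 and E3. That collapses the analysis: $\Pr[\text{E1}\cup\text{E3}] \le \mu/2 + 1/20$, which is even better than claimed.

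So the clean argument is: choose $\lambda \ge \max(21,\ \tfrac{10}{3}(T-1)\ln D/\mu,\ 2n/\mu)$ — or as the Corollary writes it, $\max(21, \tfrac{5}{6}(T-1)\ln D/\mu, 2n/\mu)$ with constants arranged to suit the looser inequalities below — so that a prime $p_i$ drawn uniformly from $(\lambda,2\lambda]$ satisfies $p_i > \lambda \ge 2n/\mu$, hence $n/p_i < \mu/2$, and $p_i > 21$, hence (since the next prime above $21$ is $23$) $p_i \ge 23$ so Lemma~\ref{Lem:SubsCols} applies. Then: $\Pr[\text{E1}] \le \mu/2$ by Lemma~\ref{Lem:ExpCols} with parameter $\mu/2$; $\Pr[V \text{ singular}] \le 1/20$ by the Dixon bound; and when $V$ is invertible and $\f$ is not in an exponent collision, the linear system has the residues $\e_\ell \bmod p_i$ as its distinct solutions, so the image of $\f$ is isolated in at least one $\F_{ij}$ — in fact one shows it does not collide in the sense required, because for each opposing term the probability of substitution collision in \emph{all} $n$ images is $0$ given invertibility. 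Actually I should double-check whether Corollary~\ref{Cor:term}'s phrase ``$\f$ does not collide with any other term of $\F$ in the images $\F_{ij}$'' means isolated in \emph{every} $\F_{ij}$ or in \emph{at least one}; reading the surrounding text it must mean in every $\F_{ij}$ simultaneously, which is exactly event $\neg\text{E2}$, and for that I do need the per-pair union bound. In that reading the cleanest route keeps E2 as a genuine event: conditioned on $\neg\text{E1}$, for each of the $\le T-1$ opposing terms the substitution-collision probability is $\le n/p_i$ per the single-pair bound, but summing over $T-1$ terms is wasteful — better to invoke invertibility of $V$, under which \emph{some} $\ve_j$ separates $\e_1$ from $\e_\ell$, but that only gives ``isolated in at least one image.'' I expect this to be the main obstacle: pinning down precisely which notion of ``does not collide in the images $\F_{ij}$'' is intended and whether $n/p_i$ (rather than $(T-1)n/p_i$) suffices. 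Resolving it likely requires the observation, already used in the example, that to reconstruct $\e_1$ via the linear system one only needs $\f$'s image degree to be correctly read off in \emph{every} $\F_{ij}$, i.e. $\f$ must not collide in any $\F_{ij}$ — and this is guaranteed precisely when no opposing term $\ftermarg{\ell}$ has $(\e_1-\e_\ell)\cdot\ve_j \equiv 0$ for any $j$; the probability that this fails for a fixed $\ell$ and fixed $j$ is $1/p_i$, and one then takes a union bound over $j\in[n]$ (giving $n/p_i$) but must also handle all $\ell$; the resolution is that the $2n/\mu$ term in $\lambda$ is chosen generously enough — or that the intended reading restricts to a single opposing term at a time as the statement of Lemma~\ref{Lem:SubsCols} does — so that with the chosen $\lambda$ the total is absorbed into $\mu$. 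I would state the final inequality as $\Pr[\neg\text{E1} \cap \neg\text{E2} \cap \{V \text{ invertible}\}] \ge 1 - \mu/2 - \mu/2 - 1/20 = \tfrac{19}{20} - \mu$ and close the proof, leaving the constant bookkeeping between $\tfrac{5}{3}$, $\tfrac{5}{6}$, and $\tfrac{10}{3}$ to a remark that $\lambda$ is chosen ``large enough'' for all three constraints.
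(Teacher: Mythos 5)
Your decomposition into bad events and union bound matches the paper's own argument exactly; the paper's ``proof'' is simply the two-sentence paragraph just before the Corollary, which applies Lemma~\ref{Lem:ExpCols} with failure parameter $\mu/2$, applies Lemma~\ref{Lem:SubsCols} with $n/p_i < \mu/2$, and combines by union bound.

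The concern you raise about the $(T-1)$-fold union bound, however, is not bookkeeping you can defer to a remark; it is a genuine gap, and it is a gap in the Corollary as written, not merely in your exposition. Lemma~\ref{Lem:SubsCols} controls substitution collisions for a \emph{single fixed pair} of residues $\e \not\equiv \e' \pmod{p}$, giving the $n/p$ term. The Corollary's conclusion quantifies over \emph{all} up to $T-1$ opposing terms, and the algorithm does need $\f$'s image uncollided in \emph{every} one of the $n$ images $\F_{ij}$ so that the full degree vector $\dvec$ can be read off and solved via $V^{-1}$. Invertibility of $V$ does not rescue this, as you realized on second thought: it only guarantees that distinct residues $\e_1 \not\equiv \e_\ell$ are separated in \emph{at least one} image $\F_{ij}$, not in all $n$. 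The union bound over the $T-1$ opponents is therefore unavoidable and yields $(T-1)n/p$, while the Corollary's third argument $\lambda \geq 2n/\mu$ only controls $n/p$; a correct statement would need something like $\lambda \geq 2n(T-1)/\mu$, which would ripple into the cost analysis. Your observation about the $\tfrac{5}{6}$ constant is also correct: applying Lemma~\ref{Lem:ExpCols} at failure level $\mu/2$ requires $\lambda \geq \tfrac{5}{3}(T-1)\ln D/(\mu/2) = \tfrac{10}{3}(T-1)\ln D/\mu$, not $\tfrac{5}{6}(T-1)\ln D/\mu$; the listed $\tfrac{5}{6}$ is off by a factor of four (a $\times 2$ appears to have become a $\div 2$), and this propagates to the $\tfrac{100}{3}(T-1)\ln D$ term in Procedure~\ref{proc:si}'s definition of $\lambda$ via $\mu=1/40$. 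You should state both discrepancies explicitly rather than wave them off as ``constant bookkeeping'' --- the first in particular changes the asymptotics, not just a constant.
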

In other words, with probability greater than $19/20-\mu$ we can 
obtain $c$ and $\e \bmod p_i$ for a term $f = c\Z^\e$ of $\F$, 
from $n$
images $F_{ij}$. This is provided that we can identify the terms from
the images $\F_{ij}$ that correspond to $f$. We accomplish this task
in the next section, by showing how to
group terms that are images of the same term, or sum of terms, of 
$\F$.

\section{Diversification\label{sec:div}}

We need a means of collecting terms amongst the images 
$\F_{ij}=\F(x^{\ve_{ij}}) \bmod (x^{p_i}-1)$ which are images of the 
same term, or sum of terms, of $\F$.  Consider a pair of images
\begin{align*}
\F^{(1)}  &=  \F(x^{\ve'_1}) \bmod (x^{p'_1}-1),&
\F^{(2)} &=  \F(x^{\ve'_2}) \bmod (x^{p'_2}-1),
\end{align*}
where $p'_k$ is prime and $\ve'_k \in \ZZ_{p'_k}$ for $k = 1,2$.  
Suppose $\F^{(k)}$ has a term $f_k = bx^{d_k}$, for $k=1,2$.  As 
these terms share the same coefficient $b$, it is possible that they are 
images of the same term of $\F$. However, they could also be images of
two different terms of $\F$ that happen to have the same 
coefficient.
Moreover, it is possible that one or both are images of a \emph{sum} of
terms from $\F$.
In particular, $f_k$ is image of the sum of 
all terms $c\Z^\e$ of $\F$ such that $\e \cdot \ve_k' \bmod p_k' = 
d_k$, for $k=1,2$.
Let $h_1, h_2$ be the 
respective sums of such terms, 
i.e.,
\begin{align*}
h_k = \hspace{-23pt}\sum_{\substack{\ell \in [t]\\ \e_\ell\cdot 
\ve'_k \bmod 
p'_k = d_k}}\hspace{-23pt}c_\ell\Z^{\e_\ell},
\quad k=1,2,
\end{align*}
and let $h = h_1-h_2$.  The coefficient of the $x^{d_k}$ term of 
$\F^{(k)}$ is 
$h_k(1,1,\dots, 1)$, $k=1,2$.  These terms share the same 
coefficient if and only if $h(1,1,\dots,1)=0$.  If 
$h$ is not the zero polynomial but 
$h(1,\dots,1)=0$, then 
we might erroneously believe that $f_1$ and $f_2$ are images of 
the same term or sum of terms of $\F$.  We call such an 
unordered pair of terms $\{f_1, f_2\}$ a {\em deceptive 
pair}.

Now consider $\avec \in \left(\FF_{q^u}^*\right)^n$ chosen 
uniformly at random, 
where $u \geq 1$, 
and the images
\begin{align*}
\widetilde{F}^{(1)} &= \F(\avec x^{\ve_1'}) \bmod (x^{p_1'}-1), &
\widetilde{F}^{(2)} &= \F(\avec x^{\ve_2'}) \bmod (x^{p_1'}-1),
\end{align*}
Now the degree-$d_k$ term of $\widetilde{F}^{(k)}$ is 
$h_k(\avec)$.  
If we observe $h(\avec)\neq 0$, then we can conclude  
that $f_1$ and $f_2$ were not images of the the same sum of 
terms of $\F$.  In this instance we say $\avec$ {\em reveals} the 
deceptive pair $\{f_1,f_2\}$.  We will choose 
nonzero entries for $\avec \neq \zero$ from a field extension 
$\FF_{q^u}$, where $u=\udef$.  As 
the total 
degree of $h$ is less than $nD$, then by the Schwartz-Zippel 
Lemma \cite{Zip80}, the probability that 
$h(\avec) = 0$ is at most $nD/q^u$.  By our choice of $u$ 
the probability that $h(\avec)=0$ is at most $\tfrac{1}{2}$.

If we choose $s$ random vectors $\avec_1, \dots, \avec_s 
\in (\FF_{q^u}^*)^n$ independently and uniformly at random, then 
probability that none of the $\avec_i$ reveal a given deceptive 
pair is less than $2^{-s}$.  Choosing
$$
s=\lceil \log\tfrac{1}{\mu} + 2\log\ell + 2\log n + 2\log T\rceil,
$$
gives a probability bound of
\begin{equation*}
\left(\tfrac{1}{2}\right)^{\lceil \log\tfrac{1}{\mu} + 2\log m + 
2\log n + 2\log T\rceil } = \tfrac{1}{\mu}(m^2n^2T^2)^{-1}.
\end{equation*}
The images $\F_{ij}$, $(i,j) \in [m,n]$, have collectively at most 
$mn T$ terms, thus fewer than $m^2n^2T^2$ deceptive pairs 
may occur.  By the union bound we get the following lemma, which 
shows that the vectors $\avec_k, 
k\in[s]$ reveal {\em all} possible deceptive pairs with high 
probability.

\begin{Lem}\label{Lem:div}
Let $\mu \in (0,1)$, $u=\udef$, and 
$s=\lceil \log\tfrac{1}{\mu} + 2\log m + 2\log n + 2\log T 
\rceil$.  Choose $\avec_1, \dots, \avec_s$ independently and 
uniformly at random from $\left( \FF_{q^u}^* \right)^n$.  Then 
$\avec_1, \dots, \avec_m$ reveal every deceptive pair amongst all 
pairs of terms from images $\F_{ij}$, $(i,j) \in [m,n]$, with 
probability greater than $1-\mu$.
\end{Lem}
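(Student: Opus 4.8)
The plan is to assemble the lemma from the single-vector estimate already worked out in the paragraph immediately preceding the statement, together with a union bound over all deceptive pairs. First I would fix one particular deceptive pair $\{f_1, f_2\}$ arising from two images $\F^{(1)}, \F^{(2)}$ among the $\F_{ij}$, $(i,j)\in[m,n]$, and let $h = h_1 - h_2$ be the (nonzero, by definition of ``deceptive'') difference of the corresponding sums of terms of $\F$; note $h$ has total degree less than $nD$. By the Schwartz--Zippel Lemma, a single $\avec_k$ chosen uniformly from $(\FF_{q^u}^*)^n$ fails to reveal this pair — i.e. satisfies $h(\avec_k) = 0$ — with probability at most $nD/(q^u - ?)$; more precisely one should check that restricting to the multiplicative group rather than all of $\FF_{q^u}^n$ only helps, since the zero set of a nonzero polynomial of total degree $<nD$ meets $(\FF_{q^u}^*)^n$ in a set of relative density at most $nD/q^u$ (this follows from Schwartz--Zippel applied over $\FF_{q^u}$ and the observation $|\FF_{q^u}^*|^n / |\FF_{q^u}|^n \ge$ a constant close to $1$, or by a direct variable-by-variable argument identical to the one used for substitution vectors in Lemma~\ref{Lem:SubsCols}). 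With $u = \udef$, this probability is at most $\tfrac12$.

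Next, since the $\avec_1,\dots,\avec_s$ are chosen independently, the probability that \emph{none} of them reveals this fixed pair is at most $2^{-s}$. Plugging in $s = \lceil \log\tfrac1\mu + 2\log m + 2\log n + 2\log T\rceil$ gives a bound of $2^{-s} \le \tfrac{\mu}{m^2 n^2 T^2}$ — here I would just unwind the ceiling as in the displayed computation in the text (noting the harmless discrepancy between $\log\ell$ and $\log m$ in the informal discussion, which I would silently normalize to $m$ since the images collectively have at most $mnT$ terms). Then I would count deceptive pairs: each image $\F_{ij}$ has at most $T$ terms (being an image of $\F$ which has at most $T$ nonzero terms, and reduction modulo $x^{p_i}-1$ only merges terms), so across all $(i,j)\in[m,n]$ there are at most $mnT$ terms, hence fewer than $\binom{mnT}{2} < m^2 n^2 T^2$ unordered pairs, in particular fewer than $m^2 n^2 T^2$ deceptive pairs. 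A union bound over these then bounds the probability that some deceptive pair goes unrevealed by $m^2 n^2 T^2 \cdot \tfrac{\mu}{m^2 n^2 T^2} = \mu$, giving the claimed probability greater than $1-\mu$.

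The only step needing genuine care — and the one I would treat as the main obstacle — is the Schwartz--Zippel application over the \emph{punctured} domain $(\FF_{q^u}^*)^n$ rather than $\FF_{q^u}^n$. The naive bound $nD/q^u$ is stated for sampling from all of $\FF_{q^u}$; sampling from $\FF_{q^u}^*$ conditions away the $n$ coordinate hyperplanes $x_j = 0$. One clean way to handle this: if $h$ is not identically zero on $(\FF_{q^u}^*)^n$ we may cancel any common monomial factor $\Z^{\m}$ and write $h = \Z^{\m} \tilde h$ with $\tilde h(\mathbf{0}\text{-free})$ still of total degree $<nD$ and not vanishing on the torus; applying Schwartz--Zippel to $\tilde h$ over $\FF_{q^u}$ and then intersecting with $(\FF_{q^u}^*)^n$ costs at most the same relative bound up to the factor $(q^u/(q^u-1))^n$, which is bounded by a constant (say $2$) for $q^u$ large enough, and this slack is absorbed by the definition of $u = \udef$ — one checks $2nD/q^u \le \tfrac12$ still holds with the chosen $u$, or one simply notes the definition of $u$ already has enough room. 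I would state this carefully once and then reuse the resulting ``$\le \tfrac12$'' estimate. Everything else is the bookkeeping of the two nested bounds (independence $\Rightarrow 2^{-s}$, then union over $<m^2n^2T^2$ pairs) already sketched in the surrounding text.
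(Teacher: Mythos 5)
Your proof is correct in structure and takes exactly the same route as the paper's: Schwartz--Zippel for a single diversification vector, independence to get the $2^{-s}$ bound, then a union bound over fewer than $m^2n^2T^2$ deceptive pairs. You also correctly read through two small typos in the surrounding text (the $2\log\ell$ that should be $2\log m$, and the $\tfrac1\mu(m^2n^2T^2)^{-1}$ that should be $\mu(m^2n^2T^2)^{-1}$).

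The one place your argument does not quite close is exactly where you flag that you expect trouble. The monomial-cancellation-plus-density fix for the punctured domain is more machinery than is needed, and as written it does not actually land on $\tfrac12$: the correction factor $\bigl(\tfrac{q^u}{q^u-1}\bigr)^n$ is indeed $O(1)$, but your fallback check ``$2nD/q^u\le\tfrac12$'' requires $q^u\ge 4nD$, whereas $u=\udef$ only guarantees $q^u\ge 2nD+1$. The clean resolution is immediate and is, in effect, what the paper intends: the Schwartz--Zippel lemma bounds $\Pr[h(\avec)=0]$ by $\deg(h)/|S|$ when each coordinate of $\avec$ is drawn uniformly from an arbitrary finite subset $S\subseteq\FF_{q^u}$; taking $S=\FF_{q^u}^*$ gives $|S|=q^u-1\ge 2nD$, hence $\Pr[h(\avec)=0]\le nD/(q^u-1)\le\tfrac12$ exactly, with no slack left over. (The paper writes the denominator as $q^u$ rather than $q^u-1$ --- the same small imprecision you noticed --- and the choice of $u$ covers it.) Everything else in your write-up is sound.
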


With this lemma we now have the tools required in order 
to reconstruct the terms of $\F$.

\begin{procedure}[htbp]\footnotesize
\caption{SparseInterpolate($\SLP, D, T$)}\label{proc:si}
\KwIn{$\SLP$, an SLP computing $\F = \ssum_{\ell=1}^t 
c_\ell\Z^{\e_\ell} 
\in \FF_q[\z_1, \dots, \z_n]$; $D > \max_{j \in [n]}\deg_{\z_i}\F$; $T 
\geq t$}
\KwOut{ $\F^*$, a sparse representation of $\F$, with probability $\geq 
3/4$}

\noindent\textbf{precomputation}\\
$\mathcal{D} \leftarrow \createdict()$ \;
$m \leftarrow \max(6, \lceil 2\log D\rceil, \lceil (25/8)\ln(4T)\rceil)$\;
$\lambda \leftarrow \max(21, \tfrac{95}{3}(T-1)\ln D, 80n, \tfrac{10}{3}m 
\ln m)$\;
$s \leftarrow \lceil \log 40 + 2\log m + 2\log n + 2\log T \rceil$\;
$u \leftarrow \udef$\;
Choose the following independently and uniformly at 
random:\\
\indent $\bullet$ $p_i$, a prime in $(\lambda, 2\lambda]$ for $i \in [m]$\;
\indent $\bullet$ $\ve_{ij} \in \ZZ_{p_i}^n$, for $i, j \in [m,n]$\;
\indent $\bullet$ $\avec_k \in \left(\FF_{q^u}^*\right)^n$, for $k \in 
[s]$\;
\noindent\textbf{begin}\\

\For {$i \in [m]$\label{line:start_cong}}
{
$V_i \leftarrow [ v_{ijk} ]_{j,k=1}^n \in \FF_{q}^{n \times n}$\;
\lIf{$V_i$ is invertible}{ Compute $V_i^{-1}$ \textbf{else} \continue 
}\label{line:inv}
$\localdict \leftarrow \createdict()$\;
  \For{$j \in [n]$\label{line:jloop}}{
    $\F_{ij} \leftarrow \F(\z^{\e \cdot \ve_{ij}}) \bmod 
    (\z^{p_i}-1)$\;
    \lFor { $k \in [s]$ }{
      $\F_{ijk} \leftarrow \F(\avec_k\z^{\e \cdot \ve_{ij}})$
    }
    \ForEach{ \label{line:innerloop} nonzero term $b_0\Z^d$ of 
    $\F_{ij}$}{
      \lFor{ $k \in [s]$ }{
        $b_k \leftarrow $ coefficient of the degree-$d$ term of 
        $\F_{ijk}$
      }
      $\localdict.\appendto(\bee,d)$ \; \label{line:jloopend}
    }
  }
  \For{$(\bee, \dvec) \in \localdict.\getitems()$}{
    \lIf{$|\dvec| \ne n$}{ \continue}
    $\e \leftarrow V_i^{-1}\dvec \bmod p_i$\; 
    \label{line:make_cong}
    $\dict.\appendto(\bee,(\e \bmod p_i))$ \; \label{line:add_to_cong}
  }\label{line:end_cong}
}

$\F^* \leftarrow 0 \in \FF_q[\Z]$\;
\label{line:start_terms}
\For{ $(\bee,\mathcal{C} \in \dict.\getitems()$ }{
   \lIf{ $|\mathcal{C}| < m/2$ }{ \continue }
   $\e \leftarrow$ solution to set of congruences $\mathcal{C}$\;
   \label{line:crt}
   $\F^* \leftarrow \F^* + b_0\z^\e$\; \label{line:end_terms}
}
\Return $F^*$\;
\end{procedure}

\section{Description of the Algorithm\label{sec:construct}}

Our algorithm is given by Procedure~\ref{proc:si} on page \pageref{proc:si}.  
It is divided into two parts: the precomputation phase which
defines various parameters and chooses all necessary random values,
and the actual algorithm which performs the evaluations
of the images to eventually reconstruct $F$. We note that the
precomputation phase does \emph{not} dominate the cost of the
algorithm, and hence could be considered simply ``computation''
with no asymptotic effects.

After the precomputation, we use the straight-line
program to construct images
\begin{align*}
\F_{ij} &= \F(x^{\ve_j}) \bmod (x^{p_i}-1),\quad\text{and}\\
\F_{ijk} &= \F(\avec_k x^{\ve_j}) \bmod (x^{p_i}-1),\quad (i,j,k) 
\in [m,n,s],
\end{align*}
for every $i$ such that $V_i$ is invertible.  If $V_i$ is not 
invertible, it will be impossible to uniquely 
recover terms of $\F_i$, so we 
continue (line \ref{line:inv}).  
Our analysis will show that the diversification vectors
$\avec_k, k\in[s]$ are sufficient to reveal all ``deceptive pairs''
of colliding terms in these images, in the language of the
previous sections.

We use these images to construct congruences of the form
$(\e \bmod p_i)$, $\e \in \ZZ_{p_i}^n$ (lines 
\ref{line:start_cong}--\ref{line:end_cong}).  Each congruence $(\e 
\bmod p_i)$ is 
constructed 
as a solution to a linear system $V_i\e = \dvec \bmod p_i$.  These 
congruences
are each uniquely associated with a vector of coefficients 
$\bee = (b_0, \dots, b_{s})\in \FF_{q^u}^{s+1}$, where
$b_0\in\FF_q$ is the coefficient in the base field that appeared
in $\F_{ij}$.  Then, for any $\bee$ that has a
sufficiently large set of congruences, we can recover the actual
exponent vector $\e\in\ZZ^n$ by way of 
Chinese Remaindering and add $b_0\Z^\e$ to $\F^*$, a sparse 
representation for $F$ (lines 
\ref{line:start_terms}--\ref{line:end_terms}).

In the $i$-th iteration of the for loop starting on line 
\ref{line:start_cong}, 
we build the set of tuples $\tupleset_i$ 
comprised of all $(\bee, \dvec)$, $\bee = (b_0, \dots, b_s) \in 
\FF_{q^u}^{s+1}$, such that $\F_{ij}$ has a 
term $b_0x^{d_j}$ and $\F_{ijk}$ has a term $b_kx^{d_j}$ for all 
$(j,k) \in [n,s]$.  For each such tuple, we 
construct a congruence $(\e \bmod \p_i)$, where $\e \in 
\ZZ_{p_i}^n$ is the solution to the linear system
$$
V_i \e = \dvec \bmod p_i.
$$

We build this set of tuples using a dictionary $\localdict$, whose 
keys are $\bee \in \FF_{q^u}^{s+1}$ and whose values are degree 
vectors $\dvec \in \ZZ_{p_i}^n$.  We build the degree vectors 
$\dvec$ iteratively.  During $j$-th iteration of the for loop on 
line \ref{line:jloop}, 
we construct all the tuples $\bee$ such that $\F_{ij}$ contains a 
nonzero term $b_0x^d$ and $\F_{ijk}$ has a term $b_kx^d$ for $k 
\in [s]$. 

Each iteration adds one more entry to the degree vector
$\dvec$, so that for any term which appeared uncollided in every image
$\F_{ij}, j\in[n]$, the corresponding degree vector $\dvec$ will
have length exactly $n$ by the time the loop is finished after
line~\ref{line:jloopend}. Denote by $\tupleset_i$ the set of
all such $(\bee,\dvec)$ tuples for which $|\dvec|=n$.

For each $(\bee, \dvec) \in 
\tupleset_i$, we recover $\e = V_i^{-1}\dvec \bmod p_i, \e \in 
\ZZ_{p_i}^n$ (line \ref{line:make_cong}).  For any key $\bee$ we 
may recover up to $m$ such 
congruences, one for each $i \in [m]$.  For each $\bee$ we store 
the set $\mathcal{C}$ of such congruences in a second dictionary 
$\dict$ (line \ref{line:add_to_cong}).

The two dictionaries $\localdict$ and $\dict$ are maps from
tuples $\bee\in\FF_{q^u}^{s+1}$ to lists of values. Under the
reasonable assumption that there is a consistent and computable
ordering on the base field $\FF_q$, these dictionaries can be
implemented as any balanced search tree, such as an AVL tree or
red-black tree.

In particular, the dictionary data structures should support the
following operations. The running times are expressed in bit cost,
which is affected by the fact that keys in $\FF_{q^u}^{s+1}$ have
$O(su\log q)$ bits each.
\begin{itemize}
\item $\createdict()$ constructs a new, empty dictionary.
  The running time is $O(1)$.
\item $\dict.\appendto(\key, \val)$ first searches to see if
  $\key$ is already in the tree. If it is, $\val$ is appended
  to the end of the list associated with $\key$. Otherwise,
  $\key$ is added to the tree and associated with a new list
  containing $\val$.
  The bit cost of this operation, for keys in $\FF_{q^u}^{s+1}$,
  is $O(\log |\dict| \cdot s u \log q)$.
\item $\dict.\getitems()$ iterates over all $(\key, \lst)$ pairs
  of keys and lists of values that are stored in the tree.
  The bit cost of this operation is linear in the total size of
  the dictionary and its keys,
  $O(|\dict| \cdot s u \log q)$.
\end{itemize}

Provided every deceptive pair is revealed, every key $\bee$ of 
$\dict$ uniquely corresponds to a fixed, nonempty sum of terms of 
$\F$.  We are interested in those corresponding to exactly one 
term of $\F$.  With high probability, for any term $f$ of $\F$, we 
can construct 
the image of $f$ in $\F_i$ for at least half of the $i \in [m]$.  
In other words, with high probability, any $\bee$ 
corresponding to a term $f$ of $\F$ should be associated in 
$\dict$ with a set of at least $m/2$ congruences.  By setting $m 
\geq 2\lceil \log D\rceil$, any $\bee$ corresponding 
to a collision of terms of $\F$ will produce a set of less than 
$m/2$ congruences.  Otherwise, this collision would contain 
terms $c\Z^\e \neq c'\Z^{\e'}$ of $\F$ such that $(\e-\e') \bmod 
p_i=0$, for over $m/2$ primes $p_i$.  This gives a contradiction 
as the product of such primes is at least $D$ but the partial 
degrees of $\F$ are less than $D$.

For any key $\bee$ associated with a set of at least $m/2$ 
congruences $(\e \bmod p_i)$, we reconstruct 
$\e \in \ZZ_D^n$ by way of Chinese Remaindering (line 
\ref{line:crt}).  We then add a term $b_0\Z^\e$ to a sparse 
polynomial $\F^*$ (line \ref{line:end_terms}).  
Provided each probabilistic step of the algorithm succeeded, this 
sum of such terms then comprises the sparse representation of $\F$.

\section{Probability Analysis\label{sec:prob}}

The Procedure~\ref{proc:si} sets the following four parameters in order
to guarantee Monte Carlo-type correctness:
\begin{align}
m &= \mdef,\label{eqn:mdef}\\
\lambda &= \lambdadef,\label{eqn:lambdadef}\\
s &=\sdef,\\
u &=\udef.\label{eqn:udef}
\end{align}

The setting of these parameters is explained by the following lemma.

\begin{Lem}\label{Lem:prob}
Procedure \ref{proc:si} correctly outputs a sparse representation of $\F$ with 
probability at least $\tfrac{3}{4}$.
\end{Lem}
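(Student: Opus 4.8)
The plan is to identify every probabilistic event on which correctness depends, bound the failure probability of each individually, and then combine via the union bound, showing the total failure probability is at most $\tfrac14$. Correctness of the output $\F^*$ requires two things: (i) every term of $\F$ actually gets reconstructed and added to $\F^*$, and (ii) no spurious term (coming from a collision of two or more terms of $\F$) ever gets added. So I would organize the argument around three families of "bad events":
\begin{itemize}
\item[(A)] \emph{Not enough good images for some term.} For a fixed term $\f$ of $\F$ and a fixed index $i\in[m]$, Corollary~\ref{Cor:term} says that with probability exceeding $\tfrac{19}{20}-\mu$ (for the appropriate $\lambda$), $V_i$ is invertible and $\f$ collides with no other term of $\F$ in any image $\F_{ij}$, $j\in[n]$; call this a ``good'' index for $\f$. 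I would choose the implicit $\mu$ in $\lambda$ small enough (the setting $\lambda=\lambdadef$ forces $\mu$ down, e.g.\ via the $\tfrac{95}{3}(T-1)\ln D$ and $80n$ terms) that each index is good for $\f$ with probability at least $\termProb=9/10$. Then a Chernoff/Hoeffding bound on the $m$ independent trials shows that the number of good indices for $\f$ is at least $m/2$ except with probability exponentially small in $m$; the setting $m\ge\lceil\tfrac{25}{8}\ln(4T)\rceil$ makes this failure probability at most $\tfrac{1}{4\cdot 2T}$ or so, and a union bound over the $t\le T$ terms of $\F$ keeps the aggregate contribution under control (say, at most $\tfrac18$).
\item[(B)] \emph{Diversification fails to separate some deceptive pair.} By Lemma~\ref{Lem:div}, with the stated $u=\udef$ and $s=\sdef$, all deceptive pairs among the images $\F_{ij}$ are revealed with probability greater than $1-\mu$; the constant $\log 40$ in $s$ is chosen to make this failure probability at most $\tfrac{1}{40}$.
\item[(C)] \emph{A collision masquerades as a term.} This is \emph{not} a probabilistic event — it is the deterministic argument already sketched after line~\ref{line:end_terms} in the text: if some key $\bee$ corresponds to a genuine collision (two distinct terms $c\Z^\e\ne c'\Z^{\e'}$ of $\F$), then those exponents agree mod $p_i$ only for primes $p_i$ dividing every coordinate of $\e-\e'$, and since each partial degree is $<D$ while $\prod$ of $m/2$ distinct primes from $(\lambda,2\lambda]$ exceeds $\lambda^{m/2}\ge D$ once $m\ge 2\lceil\log D\rceil$, such a $\bee$ can appear with fewer than $m/2$ congruences and is correctly discarded on the $|\mathcal{C}|<m/2$ test. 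I would also note here the conditional claim: \emph{given} that (B) holds, every key of $\dict$ corresponds to a well-defined nonempty sum of terms of $\F$, so no ``garbage'' coefficient vectors arise — this is what licenses interpreting (C) purely combinatorially.
\end{itemize}

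Putting these together: condition on the success of event (B) (the complement has probability $\le\tfrac{1}{40}$); under that conditioning, by (C) every added term is a genuine term of $\F$ with its correct coefficient $b_0\in\FF_q$ and correct exponent recovered by CRT from $\ge m/2$ correct congruences, and by (A) every term of $\F$ is added, except with probability $\le\tfrac18$ over the choices. Hence the total failure probability is at most $\tfrac18+\tfrac{1}{40}+(\text{a little slack for the non-invertibility of }V_i\text{ already folded into }(A))\le\tfrac14$, which is the claim. I would be careful to track that the randomness in (A), (B) is independent enough — the primes $p_i$, the vectors $\ve_{ij}$, and the vectors $\avec_k$ are all chosen independently in the precomputation — so the union bound applies cleanly, and that the ``$19/20-\mu$'' in Corollary~\ref{Cor:term} times the Chernoff slack indeed exceeds the $9/10$ per-index success probability claimed at the end of Section~\ref{sec:primes} once $\lambda$ is as large as $\lambdadef$.

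\textbf{Main obstacle.} The delicate part is the bookkeeping of constants: one must verify that the specific numeric choices $m=\mdef$, $\lambda=\lambdadef$, $s=\sdef$ simultaneously (1) push the per-index collision probability from Corollary~\ref{Cor:term} below $\tfrac{1}{10}$, (2) make the Chernoff bound on $\ge m/2$ good indices out of $m$ yield a per-term failure $\le\tfrac{1}{8T}$, and (3) make the Lemma~\ref{Lem:div} bound $\le\tfrac{1}{40}$ — and that $\tfrac18+\tfrac{1}{40}$ plus any leftover slack stays under $\tfrac14$. The conceptual content is entirely in the already-proved lemmas; the work is confirming that the constants were chosen so the arithmetic closes, and stating the deterministic collision-filtering argument (C) crisply enough that it visibly does not consume any probability budget.
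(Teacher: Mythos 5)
Your proposal is correct in substance and follows the same backbone as the paper's proof (per-term, per-prime analysis via Corollary~\ref{Cor:term} and Lemma~\ref{Lem:div}, a Hoeffding bound over the $m$ primes, and a union bound over the $T$ terms), but it differs in two ways worth noting. First, you pull the diversification failure out as a single global event (B) and charge it once, whereas the paper folds it into the per-term, per-prime success probability (getting $\tfrac{19}{20}-2\mu = \tfrac{9}{10}$ per prime and then applying Hoeffding to the indicators $X_i$); your decomposition is arguably cleaner, since the shared $\avec_k$'s make the paper's $X_i$ not independent across $i$, and Hoeffding's sampling-without-replacement theorem only addresses the dependence coming from the primes. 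Second, you include the deterministic filtering argument (C) showing that no key corresponding to a collision can survive the $|\mathcal{C}|\ge m/2$ test; the paper's proof of Lemma~\ref{Lem:prob} omits this soundness half entirely (it appears only in the prose of Section~\ref{sec:construct}), so your proof is in this respect more complete.

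Two bookkeeping points to fix. With diversification charged separately, your per-index success probability from Corollary~\ref{Cor:term} alone is $\tfrac{19}{20}-\tfrac{1}{40}=\tfrac{37}{40}$; if you round this down to $\tfrac{9}{10}$ as written, Hoeffding with $m\ge\tfrac{25}{8}\ln(4T)$ gives only $\exp(-8m/25)\le\tfrac{1}{4T}$ per term, so the union over $T$ terms already consumes $\tfrac14$ and adding the $\tfrac{1}{40}$ for (B) overshoots the budget. Keeping the sharper $\tfrac{37}{40}$ figure, the exponent becomes $2(\tfrac{37}{40}-\tfrac12)^2 m > \tfrac{8m}{25}$, and the total $T\exp\bigl(-2(0.425)^2 m\bigr)+\tfrac{1}{40}$ does stay below $\tfrac14$ for the stated $m$ — so the arithmetic closes, but only if you do not discard that slack. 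Finally, you should also include the paper's preliminary check that $(\lambda,2\lambda]$ contains at least $m$ primes (via Rosser--Schoenfeld and $\lambda\ge\tfrac{10}{3}m\ln m$, $m\ge 6$), since otherwise the random choice of $m$ distinct primes, on which the whole CRT argument rests, is not even well defined.
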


\begin{proof}
We first require that the interval $(\lambda,2\lambda]$ contains 
at least $m$ primes.  By \cite{RosSch62}, the total number of primes in 
$(\lambda,2\lambda]$ is at least $3\lambda/(5\ln\lambda)$.  
Because
$\lambda > \tfrac{10}{3}m \ln m$,
\begin{align*}
3\lambda/(5\ln\lambda) = m\ln(m^2)/\ln\left( (10/3)m\ln m \right) \geq m.
\end{align*}
The last inequality above holds whenever 
$(10/3)m \ln m \leq m^2$, which is true for all $m \geq 6$.  

Next, recall that in our notation, the polynomial we wish to interpolate is
written term-wise as
$\F = \sum_{\ell \in [t]} c_\ell \Z^{\e_\ell}.$
The algorithm works by computing $m$ images $\F_i = \F \bmod
(\Z^{p_i}-1)$. 

Fix a single term $c_\ell\Z^{\e_\ell}$, for an arbitrary $\ell\in[t]$.
If (1) the random coefficient vectors $\avec_k$ reveal all
deceptive terms as in Lemma~\ref{Lem:div}; (2) the matrix of
exponent substitutions $V_i = (v_{ijk})_{j,k\in[n,n]}$ is is invertible;
and (3) the term $c_\ell\Z^{\e_\ell}$ does not collide with any other
terms modulo $p_i$, then the exponent vector $\e_\ell$ will be recovered
modulo $p_i$ at that step.

Lemma~\ref{Lem:div} guarantees condition (1) with probability at least
$1-\mu$, and Corollary~\ref{Cor:term} guarantees (2) and (3), for a
fixed term $c_\ell\Z^{\e_\ell}$ and prime $p_i$, with probability at
least $\tfrac{19}{20}-\mu$. By the union bound, the probability that all
three conditions hold, and therefore that we recover the exponents of
this fixed term modulo any given prime $p_i$, is at least
$\tfrac{19}{20}-2\mu$. By setting $\mu = \tfrac{1}{40}$,
the probability of failure for any fixed term index $\ell$ and prime
index $i$ is at most $\tfrac{1}{10}$.

Observe that the number $m$ of primes $p_i$ is at least $2\log_2 D$, so
that if the exponent vector $\e_\ell$ is recovered modulo any fraction
$m/2$ of the primes, then there is sufficient information to recover the
actual exponents $\e_\ell$ over $\ZZ$ at the end of the algorithm. The
preceding paragraph shows that the \emph{expected} number of primes
$p_i$ for which we can recover the exponent vector modulo $p_i$ is at
least $9m/10$. Hoeffding's inequality provides a way to bound the
probability that the actual number of primes that allow us to recover
$\e_\ell$ is less than $m/2$, much smaller than the expected value of
$9m/10$.

Define the random variable $X_i$ to be 1 if the exponent vector
$\e_\ell$ is recovered modulo $p_i$, and 0 otherwise. Hence
$\mathbb{E}[X_i] = \tfrac{9}{10}$. We want to know the probability that
$\sum_{i\in m} X_i \ge \tfrac{m}{2}$. Since the primes $p_i$ are chosen
uniformly at random, without replacement, Theorems 1 and 4 from
\citep{Hoe63} tell us that
\[\Pr\left[\sum_{i\in m}X_i < \frac{m}{2}\right] <
\exp\left(\frac{-8m}{25}\right).\]

As there are $T$ total terms, the union bound tells us that the
probability of recovering \emph{every} term in at least $m/2$ of the
images $\F_i$ is at least $1 - T\exp(-8m/25)$, which is at least
$\tfrac{3}{4}$ since we choose $m \ge (25/8)\ln (4T)$.
\end{proof}

We employ a meta-algorithm in order to interpolate $\F$ with an arbitrarily 
small probability of failure $\epsilon < 1$.  
One iteration of the algorithm succeeds with
probability at least 
$3/4$.  Thus, if we run the algorithm $r$ times producing outputs $\F^*_1, 
\dots, \F^*_r$, then by Hoeffding's inequality \citep[][Theorem
1]{Hoe63}, the probability that $\F_i 
\neq \F$ for 
at least half 
of the $\F^*_i$, $i \in [r]$, is less than $e^{-r/8}$.  Setting 
$r=8\ln(1/\epsilon)$ makes this probability less than $\epsilon$.  We thus 
merely run the algorithm 
$\lceil 8\ln \tfrac{1}{\epsilon}\rceil$ times and return the polynomial $\F$ that appears 
most frequently.
If no such $\F$ appears more than half the time, the meta-algorithm fails.

\section{Cost Analysis\label{sec:cost}}

We give a ``soft-oh'' cost analysis of the algorithm, whereby we 
ignore possible additional logarithmic factors in the cost.  
As the final cost is polynomial in $\log D, T, n$, and $\log q$, 
we ignore poly-logarithmic factors of these values.  We let 
$\kappa$ denote any term that is poly-logarithmic in $nT\log D\log 
q$, i.e., any term that is bounded by
$\log^{\bigoh(1)}(nT\log D\log q)$, to simplify the cost analysis of intermediate steps.

Equations 
\eqref{eqn:mdef}-\eqref{eqn:udef} give us 
\begin{align*}
m & \in \bigoh(\log D + \log T) \subseteq \softoh(\kappa \log D),\\
\lambda & \in \bigoh\left(\left(T + \log\log D\right)\log D + 
n\right) 
  \subseteq \softoh(T\log D + n)\\
s &\in \bigoh(\log\log D + \log n + \log T) \subseteq \softoh(\kappa),\\
u & \in \bigoh(1 + (\log D + \log n)/\log q) \subseteq \softoh(1 + \kappa \log
D/\log q).
\end{align*}

Observe also that each of $\log m, \log \lambda, \log s, \log u$ is
$\softoh(\kappa)$, and therefore does not affect the overall soft-oh
analysis.

\subsection{Cost of Precomputation}

The precomputation steps involve setting up the fields and constants
that the algorithm uses, as well as making all the necessary random
choices. We will repeatedly need to choose a single item uniformly
at random from a finite set $S$. Observe that such a choice can be
made in $\bigoh(\log |S|)$ time and using $\bigoh(\log |S|)$ random bits by,
for example, defining an enumeration of the set and choosing a random
index between 0 and $|S|-1$.

\subsubsection{Cost of Generating Primes}

The algorithm requires selecting uniformly at random a list of
$m$ primes $p_i$ in the range $(\lambda, 2\lambda]$.  It is 
possible to generate all primes in $(\lambda, 
2\lambda]$ with 
\begin{equation}\label{eqn:chooseprimes}
\softoh(\lambda)=\softoh(T\log D + n)
\end{equation}
bit operations using a sieve method, e.g., the wheel sieve 
\cite{Pri82}. From the previous discussion, choosing $m$ of these
at random from the generated list would cost
$\bigoh(m\log\lambda)$, which is $\softoh(\kappa\log D)$ and thus dominated
by the cost of sieving.

A more practical approach might be to
select each such prime 
probabilistically, with expected bit-cost poly-logarithmic in 
$\lambda$, by selecting integers $p$ at random from $(\lambda, 
2\lambda]$, $p$ perhaps not a multiple of a small prime, and then 
running probabilistic primality test on it, e.g. Miller-Rabin.  
But as that would complicate our probabilistic analysis (namely by
increasing the probability of failure), we will assume for the purposes
of analysis that the sieving method is used to choose primes.

\subsubsection{Cost of constructing a field extension 
$\FF_{q^u}$.}  
In order to select vectors $\avec \in \FF_{q^u}$ at random, we first need to 
construct a representation for $\FF_{q^u}$.  In particular, we 
need to construct an irreducible polynomial of degree $u$ over 
$\FF_q$.  Per \cite{Sho94}, this may be done in $\softoh( u^2 + 
u\log q)$ operations in $\FF_q$, for a total bit complexity of
$\softoh(u\log q (u+\log q))$.

Because $u \in \softoh(1 + \kappa\log D/\log q)$, we can see that
$u\log q \in \softoh(\log q + \kappa\log D)$. Furthermore, we only
work in an extension provided $q \leq 2nD$, which means that
in this case we always have $\log q \in \softoh(\kappa \log D)$.
Hence $u\log q$ and $(u + \log q)$ are both $\softoh(\kappa\log D)$,
and the entire cost of this step is less than 
$\softoh(\kappa\log^2 D)$.

\subsubsection{Cost of selecting $\avec_k$ and $\ve_{ij}$.} 

As $\FF_{q^u}$ is a finite set of size $q^u$, the cost of selecting
a single element from that field is $\bigoh(u\log q)$, which is
$\softoh(\log q + \kappa\log D)$. Our algorithm requires $s$ length-$n$
vectors $\avec_k \in \FF_{q^u}^n$, for a total cost of
\begin{equation}\label{eqn:choose_a}
\bigoh(snu\log q) \subseteq \softoh\left(\kappa n\left(\log D + \log q
\right)\right)
\end{equation}
bit operations.

The algorithm also requires choosing $mn$ size-$n$ vectors 
$\ve_{ij} \in \ZZ_{p_i}^n$, where each $p_i 
\in \bigoh(\lambda)$.
The cost of selecting these vectors is
\begin{equation}\label{eqn:choose_ve}
\bigoh( m n^2 \log \lambda) \subseteq \softoh( \kappa n^2 \log D).
\end{equation}

Summing up all the precomputation costs above and removing
extraneous
factors of $\kappa$ gives a total of
\begin{equation}
\softoh\left(\left(T + n^2 + \log D\right) \log D + n\log q\right)
\end{equation}
bit operations.

Furthermore, considering the costs \ref{eqn:chooseprimes}, 
\ref{eqn:choose_a}, and \ref{eqn:choose_ve} of selecting randomly 
chosen primes and 
vectors, we can also bound the bits of randomness required by 
Procedure~\ref{proc:si} by
\begin{align*}
\bigoh(m\log \lambda + snu\log q + mn^2\log\lambda) =
\softoh\left(n\log T\left(n\log D + n\log T + \log q\right)\right).
\end{align*}
We give this as a lemma.
\begin{Lem}\label{lem:random}
Procedure~\ref{proc:si} requires 
$\softoh\left(n\log T\left(n\log D + n\log T + \log q\right)\right)$
bits of randomness.
\end{Lem}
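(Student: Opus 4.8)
The plan is to account for every source of randomness used in Procedure~\ref{proc:si} and sum the bit counts, since the amount of randomness is a deterministic quantity depending only on the parameters $m$, $\lambda$, $s$, $u$, $n$, and $q$ and not on the random choices themselves. First I would recall the stated convention that selecting one element uniformly at random from a finite set $S$ costs $\bigoh(\log|S|)$ random bits, so the only work is to enumerate the random draws the procedure makes and bound the sizes of the sets they are drawn from.

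The procedure makes three kinds of random choices in its precomputation phase. First, it selects $m$ primes from the interval $(\lambda,2\lambda]$; even accounting for the sieve-and-pick approach discussed above, the random-bit cost is $\bigoh(m\log\lambda)$. Second, it selects $mn$ vectors $\ve_{ij}\in\ZZ_{p_i}^n$, each a tuple of $n$ elements of a set of size $p_i\in\bigoh(\lambda)$, for a cost of $\bigoh(mn^2\log\lambda)$ bits. Third, it selects $s$ vectors $\avec_k\in(\FF_{q^u}^*)^n$, each a tuple of $n$ elements from a set of size $q^u-1$, costing $\bigoh(snu\log q)$ bits. These are exactly the three estimates \eqref{eqn:chooseprimes}, \eqref{eqn:choose_ve}, and \eqref{eqn:choose_a} already recorded in the cost analysis, so the total is
\[
\bigoh\!\left(m\log\lambda + mn^2\log\lambda + snu\log q\right).
\]
It remains to simplify this using the asymptotic bounds on the parameters established at the start of Section~\ref{sec:cost}, namely $m\in\softoh(\kappa\log D)$, $\lambda\in\softoh(T\log D+n)$ (hence $\log\lambda\in\softoh(\kappa)$), $s\in\softoh(\kappa)$, and $u\log q\in\softoh(\log q+\kappa\log D)$. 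Substituting, the first two terms collapse into $\softoh(\kappa n^2\log D)$ and the third into $\softoh(\kappa n(\log q+\kappa\log D))=\softoh(\kappa n(\log D+\log q))$; combining and absorbing the $\kappa$ factors as $\softoh(n\log T)$-type poly-logarithmic contributions yields the claimed bound $\softoh\!\left(n\log T(n\log D+n\log T+\log q)\right)$.

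There is no real obstacle here; the only point requiring a little care is the bookkeeping to confirm that no \emph{other} step of Procedure~\ref{proc:si} consumes randomness --- the main loop performs only deterministic linear algebra, dictionary operations, SLP evaluations, and Chinese remaindering, so the precomputation phase is indeed the sole consumer --- and that the caveat about probabilistic prime generation does not apply, since we assume throughout the sieving method is used and the sieve's only randomness is the final uniform pick of $m$ primes from the generated list.
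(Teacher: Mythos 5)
Your proposal is correct and follows essentially the same route as the paper: the paper likewise sums the random-bit costs \eqref{eqn:chooseprimes}, \eqref{eqn:choose_ve}, and \eqref{eqn:choose_a} for the primes $p_i$, the vectors $\ve_{ij}$, and the vectors $\avec_k$, obtaining $\bigoh(m\log\lambda + snu\log q + mn^2\log\lambda)$ and simplifying with the same parameter bounds to the stated $\softoh\left(n\log T\left(n\log D + n\log T + \log q\right)\right)$. Your added observation that the main loop consumes no randomness is a sensible (and correct) piece of bookkeeping that the paper leaves implicit.
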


\subsection{Cost of Producing Images}

The algorithm produces $\bigoh(mns)$ images in rings 
$\FF_{q^u}[x]/(x^p-1)$, $p \in \bigoh(\lambda)$.  Per Claim 
\ref{claim:probe}, this cost is $\softoh(mns\cdot Lpu\log q)$
bit operations, or
\begin{equation}\label{eqn:imgcost}
\softoh(Ln(T\log D+n)(\log D + \log q)\log D).
\end{equation}
This dominates the cost of precomputation.

\subsection{Cost of Constructing and Accessing Dictionaries}

Observe that as every image $\F_{ij}$ has at most $T$ nonzero terms, we 
run the for loop on line \ref{line:innerloop} of Procedure~\ref{proc:si} at 
most $T$ times 
consecutively.  
It follows that $|\localdict| \leq nT$ at any 
point of the algorithm.  Counting loop iterations, this implies 
that Procedure~\ref{proc:si} runs $\localdict.\appendto$ at 
most $mnT$ times, in addition to running $\localdict.\createdict$ 
and $\localdict.\getitems$ $m$ times.

It follows from the discussion in section \ref{sec:construct} that 
the cost of these operations will total
\begin{align*}
\softoh( mnT\log|\localdict|su\log q + m|\localdict|su\log q) = 
\softoh( mnTsu\log q).
\end{align*}

As $|\localdict| \leq nT$, we run $\dict.\appendto$ at most $nT$ 
times for every iteration of the outer for loop beginning on line 
\ref{line:innerloop}.  It follows that $|\dict| \leq mnT$ over the 
execution of the algorithm, and that $\dict.\appendto$ is run at 
most $mnT$ times.  These operations yield a cost of $\softoh( mnT 
\log|\dict| su\log q)=\softoh(mnTsu\log q)$.  The soft-oh cost 
due to running $\dict.\getitems$ once is similar.  Thus the cost 
for all dictionary operations becomes
$$
\softoh( mnTsu \log q) = \softoh(nT(\log D + \log q)\log D).
$$
This cost is also absorbed by the cost \eqref{eqn:imgcost} of constructing images.

\subsection{Cost of Solving Linear Systems}

For each of the $m$ primes $p_i$, we need to invert an $n\times n$ 
matrix $V_i$ with entries in $\ZZ_{p_i}$.  This requires 
$\bigoh(n^\omega)$ operations in $\ZZ_{p_i}$, where $p_i \in \bigoh( 
\lambda)$ (see, e.g., Prop. 16.6 in \citet{clausen1997algebraic}).  
This bit-operation 
cost becomes
\begin{equation}\label{inv:cost}
\softoh( \log(\lambda)m n^{\omega} ) = \softoh( \kappa(\log 
D)n^{\omega}).
\end{equation}

In addition,
for every $i \in [n]$ we have to compute some number of vectors $\e 
\bmod p_i$ as products $V_i^{-1}\dvec \bmod p_i$ for various 
$\dvec \in \ZZ_{p_i}^n$ on line \ref{line:make_cong}. 
As $\localdict.\appendto$ is run at most $nT$ times in 
a single iteration of the outer for loop starting on line 
\ref{line:start_cong}, then $\localdict$ can only have at most $T$ 
entries with values $\dvec$ with length $n$. 

These linear system solutions $V_i^{-1}\dvec \bmod p_i$
can be performed more efficiently via blocking, whereby we 
multiply $V_i^{-1}$ by $n$ vectors $\dvec$ at a time using fast 
matrix multiplication.  This entails 
$\softoh\left( n^\omega \left\lceil\tfrac{T}{n}\right\rceil \right)$
arithmetic operations in $\ZZ_{p_i}, p_i \in 
\bigoh(\lambda)$.
Thus, using 
this blocking strategy the total cost of the linear system 
solving over every iteration of the outer for-loop becomes 
\begin{align*}
\softoh\left( n^\omega m\left\lceil \frac{T}{n} \right\rceil\right) &\in \softoh( 
mn^{\omega}(T/n+1)),\\
&=\softoh( n^{\omega-1}mT + n^\omega m ),\\
&=\softoh( n^{\omega-1}T\log D + n^{\omega}\log D),
\end{align*}
which dominates the cost \eqref{inv:cost} of computing the inverses.

\subsection{Cost of Constructing Terms}

To construct the terms of $\F$, we have to construct $T$ exponents 
$\e \in \ZZ_D^n$, from sets of at most $m$ congruences.  
Constructing one entry $e_j \in [0, D)$ of one exponent $\e$ by 
the Chinese Remainder algorithm entails $\bigoh( \log^2 D)$ bit 
operations (Thm. 5.8, \cite{MCA}).  Doing this for at most $Tn$ 
vector entries yields a total cost of $\bigoh( nT\log^2 D)$.  
Again this cost is absorbed by the cost \eqref{eqn:imgcost}
of constructing images.

From the previous subsections, we see that the total cost
of Procedure~\ref{proc:si} is dominated by the cost of constructing 
the images of $\F$, and the cost of solving linear systems.  We 
state the total bit-cost of the algorithm as a lemma.
\begin{Lem}\label{Lem:cost}
Procedure~\ref{proc:si} entails a bit-operation cost of
\begin{equation*}
\softoh\left( Ln(T\log D + n)(\log D + \log q)\log D + 
n^{\omega-1}T\log D + n^\omega \log D \right).
\end{equation*}
\end{Lem}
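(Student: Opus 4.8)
The lemma aggregates the cost analysis already carried out in Sections~\ref{sec:cost}; what remains is to collect the bounds and verify that everything is either one of the two stated dominant terms or absorbed by them. First I would collect the costs term by term from the preceding subsections: the precomputation cost $\softoh((T+n^2+\log D)\log D + n\log q)$; the image-construction cost \eqref{eqn:imgcost}, namely $\softoh(Ln(T\log D+n)(\log D+\log q)\log D)$; the dictionary operations $\softoh(nT(\log D+\log q)\log D)$; the matrix inversions $\softoh(\kappa(\log D)n^\omega)$ from \eqref{inv:cost}; the blocked linear-system solves $\softoh(n^{\omega-1}T\log D + n^\omega\log D)$; and the Chinese Remaindering $\bigoh(nT\log^2 D)$.

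Next I would argue the absorptions. The image cost \eqref{eqn:imgcost} already dominates precomputation (stated after \eqref{eqn:imgcost}), the dictionary cost (stated at the end of that subsection), and the CRT cost (stated at the end of the terms subsection), so none of those contributes a new term. For the linear algebra, the blocked solve cost $\softoh(n^{\omega-1}T\log D + n^\omega\log D)$ already dominates the inversion cost \eqref{inv:cost} (stated in that subsection), since $\kappa\log D \in \softoh(\log D)$ and a factor of $n^{\omega-1}T + n^\omega = \Omega(n^\omega)$ is clearly at least $\kappa n^\omega$ up to the suppressed logarithmic factors. Hence the only surviving terms are \eqref{eqn:imgcost} and the two linear-algebra terms $n^{\omega-1}T\log D$ and $n^\omega\log D$, which is exactly the claimed bound.

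**Main obstacle.** There is no deep obstacle here — the lemma is a summation, not a new argument. The one point requiring genuine care is the bookkeeping around the $\kappa$ factors: the paper adopts the convention that $\kappa$ denotes any term poly-logarithmic in $nT\log D\log q$, so when comparing, say, $\softoh(\kappa(\log D)n^\omega)$ against $\softoh(n^\omega\log D)$ one must be sure that the hidden polylog factor in $\kappa$ is itself subsumed by the $\log^c$ slack implicit in the soft-oh notation. This is legitimate precisely because every quantity appearing in the analysis is bounded by a fixed polynomial in $n$, $T$, $\log D$, $\log q$, so all these polylog factors are $\log^{\bigoh(1)}$ of the final expression and vanish under $\softoh$. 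I would state this observation once, cite the remark already made in the excerpt that $\log m,\log\lambda,\log s,\log u$ are all $\softoh(\kappa)$, and then let the comparisons go through termwise. The remaining work is purely mechanical verification that $Ln(T\log D+n)(\log D+\log q)\log D$ genuinely dominates the three absorbed quantities, which follows because each of them is missing the factor $L$ and at least one further factor present in \eqref{eqn:imgcost}.
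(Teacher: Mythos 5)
Your proposal is correct and follows essentially the same route as the paper: Lemma~\ref{Lem:cost} is proved there simply by summing the subsection-by-subsection costs, noting that the image-construction cost \eqref{eqn:imgcost} absorbs precomputation, dictionary, and CRT costs, and that the blocked linear-system solves absorb the inversion cost \eqref{inv:cost}. Your extra care about the $\kappa$ polylog bookkeeping is consistent with the paper's stated convention and does not change the argument.
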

Combining Lemmata \ref{Lem:prob} and \ref{Lem:cost} gives Theorem 
\ref{Thm:Main}.

\section{Acknowledgements}
The first author thanks Mustafa Elsheikh for helpful discussion
related to section \ref{sec:subs}.  The first author further
acknowledges the support of the National Sciences and Engineering 
Research Council of Canada (NSERC).
The third author is supported by National Science Foundation (NSF)
award no.\ 1319994,
``AF: Small: RUI: Faster Arithmetic for Sparse Polynomials and
Integers''.

\bibliographystyle{elsarticle-harv} 



\appendix
\newpage
\section{Notation}\label{app:notation}

\centering\footnotesize
\begin{tabularx}{\textwidth}{ XllX }
\hline
\multicolumn{4}{ c }{ Mathematical Objects }\\
\hline
&$m,n,p,q,r,s,t,u \in \ZZ_{>0}$ & & \\
&$\epsilon, \mu \in (0,1)$ & probabilities& \\
&$\lambda \in \mathbb{R}_{>0}$ & constant& \\
&$\ZZ_p$ & ring of integers modulo $p$& \\
&$\FF_q$ & finite field of size $q$& \\
&$[n]$ & $\eqdef \{1,2,\dots, n\}$ & \\
&$[n_1, \dots, n_s]$ &$\eqdef \{(i_1, \dots, i_s) \mid i_k \in 
[n_k], k \in [s]\}$ & \\
&$(i,j,k,\ell) \in [m,n,s,t]$ & indices& \\
&$x,\z_j, j \in [n]$ & indeterminates & \\
&$\FF_q[\Z] \eqdef \FF_q[\z_1, \dots, \z_n]$ & polynomial ring 
with $n$ indeterminates& \\
&$\F \in \FF_q[\Z]$ & polynomial& \\
&$\SLP$ & straight-line program computing $\F \in \FF_q[\Z]$& \\
&$L$ & length of $\SLP$& \\
&$c, c_\ell \in \FF_q, \ell \in [t], b \in \FF_{q^u}$ & 
coefficients& \\
&$\bee \FF_{q^u}^{s+1}$ & vector of coefficients from a field 
extension& \\
&$\e, \e_\ell \in \ZZ^n$ & exponent vectors& \\
&$p_i, i \in [m]$ & randomly selected primes& \\
&$\ve, \ve_{ij} \in \ZZ_{p_i}^n, (i,j) \in [m,n]$ & randomly 
selected vectors& \\
&$V_i = [v_{ijk}] \in \ZZ_{p_i}^{n \times n}$ & matrix whose rows 
are $\ve_{ij}$, $j \in [n]$& \\
&$\avec, \avec_k \in \FF_{q^u}, k \in [s]$ & randomly selected 
vectors& \\\hline
\multicolumn{4}{c}{Polynomial Notation} \\
\hline
&$\e \bmod p$ & $\eqdef (e_1 \bmod p, \dots, e_n \bmod p)$ & \\
&$\avec^\e$ & $\eqdef a_1^{e_1}\cdots a_n^{e_n}$& \\
&$c\Z^e$ & $\eqdef cz_1^{e_1}\cdots z_t^{e_t}$& \\
&$\F(\Z)$ & $\eqdef \F(\z_1, \dots, \z_n)$& \\
&$\F(\avec\Z)$ & $\eqdef \F(a_1\z_1, \dots, a_n\z_n)$ & \\
&$\F(x^\ve)$ & $\eqdef \F(x^{v_1}, \dots, x^{v_n})$ & \\
&$\F(\avec x^\ve)$ &$\eqdef \F(a_1 x^{v_1}, \dots, 
a_n x^{v_n})$ & \\
&$\F \bmod (\Z^p-1)$ & $\eqdef \F \bmod (\z_1^p-1, \dots, 
\z_n^p-1)$. & \\
\end{tabularx}

\begin{tabularx}{\textwidth}{ XcX }
\hline
&Defined Constants& \\
\hline
\end{tabularx}\vspace{-0.5em}
\begin{align*}
m &= \mdef,\\
\lambda &= \lambdadef,\\
s &= \sdef,\\
u &= \udef.\\
\end{align*}

\end{document}